\documentclass[11pt,letterpaper]{article}
\usepackage{amsfonts,color,ifpdf,latexsym,graphicx,url,amssymb,enumitem,amsthm,wrapfig}
\urlstyle{rm}

\title{Rigid Foldability is NP-Hard}
\author{Hugo Akitaya \and Erik D. Demaine \and Takashi Horiyama \and Thomas C. Hull \and Jason S. Ku \and Tomohiro Tachi}
\date{}

\usepackage{amsmath}
\usepackage{bm}

\def\vec{\mathbf} 
\newcommand{\degrees}{^\circ}

\graphicspath{
 {./images/}
}

\newif\ifabstract
\abstracttrue
\abstractfalse
\newif\iffull
\ifabstract \fullfalse \else \fulltrue \fi

\newif\ifanalysis
\analysisfalse

\usepackage
  [breaklinks,bookmarks,bookmarksnumbered,bookmarksopen,bookmarksopenlevel=2]
  {hyperref}
{\makeatletter \hypersetup{pdftitle={\@title}}}

\topmargin 0pt
\advance \topmargin by -\headheight
\advance \topmargin by -\headsep
\textheight 8.9in
\oddsidemargin 0pt
\evensidemargin \oddsidemargin
\marginparwidth 0.5in
\textwidth 6.5in

\def\ifpdf\input{#.pdf_t}\else\input{#.pstex_t}\fi1{\ifpdf\input{#1.pdf_t}\else\input{#1.pstex_t}\fi}

{\makeatletter \gdef\fps@figure{!tbp}}


\let\realbfseries=\bfseries
\def\bfseries{\realbfseries\boldmath}

\newtheorem{theorem}{Theorem}
\newtheorem{lemma}[theorem]{Lemma}
\newtheorem{corollary}[theorem]{Corollary}
\newtheorem{remark}[theorem]{Remark}
\theoremstyle{definition}
\newtheorem{definition}[theorem]{Definition}

\let\epsilon=\varepsilon

\begin{document}
\maketitle
\section{Introduction}
Intuitively, a rigid origami is a developable surface composed of planar rigid facets and rotational hinges. 
Determining the flexibility of such system is called \emph{rigid foldability}, i.e., the problem of judging if a given planar crease pattern can flex finitely to an intermediate state without deforming its facets. 
In practical sense, rigid-foldable origami is a mechanism whose flexibility does not rely on irreversible deformation of materials; it only relies on rotations around predefined hinges.  
It therefore has enormous applications in different fields and at different scales, e.g., self-folding mechanisms of microscopic material \cite{SilverbergScience}, foldable packaging \cite{Dai15}, transformable adaptive architecture with thick panels \cite{Tachi2}, deployable space structures \cite{Schenk}, and so on.
Despite its usefulness, we do not have many general results about rigid foldability.  We can determine easily when a single vertex is rigid-foldable \cite{ACDEHKLT}, but most other results  are on determining if specific examples are rigidly foldable, e.g., \cite{Tachi:2010:bdffpq}.
This is in contrast to to the flat-foldability, which has been well investigated; single-vertex necessary conditions are known \cite{Hull}, as are linear time algorithms for single vertex \cite{GFALOP} and the NP-hardness for general crease pattern \cite{bern-hayes}, and even for box pleating~\cite{box-pleating}.

\begin{wrapfigure}[12]{r}[0pt]{0.2\linewidth}
	\includegraphics[width=\linewidth]{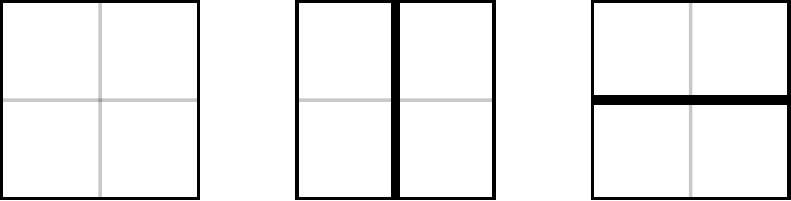}
	\centering
	\caption{Four lines meeting perpendicularly can rigidly fold from the flat state with optional creases, but not using all creases.}
	\label{fig:overlay}
\end{wrapfigure}

Our objective is to characterize rigid foldability in general.
In this paper, we focus on rigid foldability from a planar sheet of paper,
and we introduce the following types of rigid foldability:
\begin{description}
\item [Rigid foldability (with optional creases)]
A \emph{continuous rigid folding motion} is a continuous transformation of a sheet of paper from one state to the other such that each face, i.e., a region separated by creases, is rigid. 
We say a crease pattern \emph{rigidly foldable} (with optional creases) if there is a continuous rigid folding motion from the initial state that is not a rigid body motion.
\item [Rigid foldability using all creases]
We say a crease pattern is \emph{rigidly foldable using all creases} if there is a continuous rigid folding motion where the dihedral angle at every crease strictly increases or decreases.
\end{description}

For example, a single vertex with four perpendicular lines Figure~\ref{fig:overlay} can rigidly fold either vertically or horizontally, but cannot fold using all four lines.
So, it is rigidly foldable with optional creases, but not using all creases.
Single-vertex rigid foldability for a given assignment is studied in \cite{ACDEHKLT}; this provides a simple necessary and sufficient condition that there must be a tripod or X pattern incident to the vertex. 
However, a mere intersection of such single vertex condition does not lead to useful condition of multi-vertex rigid-foldability, since the combination of multiple vertex forms a globally constrained system of folding speeds.

In this paper, we show that deciding rigid foldability of a given crease pattern using all creases is weakly NP-hard by a reduction from Partition, and that deciding rigid foldability with optional creases is strongly NP-hard by a reduction from 1-in-3 SAT.
Unlike flat foldability of origami or flexibility of other kinematic linkages, whose complexity originates in the complexity of the layer ordering and possible self-intersection of the material,  rigid foldability from a planar state is hard even though there is no potential self-intersection.
In fact, the complexity comes from the combinatorial behavior of the different possible rigid folding configurations at each vertex.
The results underpin the fact that it is harder to fold from an unfolded sheet of paper than to unfold a folded state back to a plane,
frequently encountered problem when realizing folding-based systems such as self-folding matter and reconfigurable robots.

\subsection{Mathematical Model}

We model rigid folding with the concept of isometric foldings as introduced by Robertson \cite{Robertson}. 
\begin{definition}[Folding]
Given two manifolds, $M$ and $N$, with or without boundary, we call a function $f:M\rightarrow N$ an \emph{isometric folding} if $f$ maps finitely-piecewise geodesic curves in $M$ to finitely-piecewise geodesic curves in $N$, where both curves are parameterized with respect to arc-length.
It is not difficult to show that all isometric foldings under this definition must also be continuous.  One special case is where dim $M=$ dim $N$, which we call a \emph{flat folding}.  For any isometric folding $f$, the set $\Sigma(f)\subset M$ of points where $f$ is non-differentiable partitions $M$ into a finite cell complex (see \cite{Robertson}), which we call the \emph{crease pattern} of $f$. 
\end{definition} 

\begin{definition}[Rigid Origami Folded State]
Let $M\subset {\mathbb R}^2$  be a closed polygon which will model our material to be folded.  Then the crease pattern $\Sigma(f)$ will be a planar straight-line graph embedded on $M$.  If we also have that $N={\mathbb R}^3$ and the domain material $M$ does not cross itself in the image $f(M)$, then we call $f$ a \emph{rigid origami folded state}; this word choice is justified because $f(M)$ will be a continuous mapping of $M$ into ${\mathbb R}^3$ where the flatness of each 2-cell in $\Sigma(f)$ is preserved in $f(M)$. 
Each 1-cell of $\Sigma(f)$ is called a \emph{crease line}, and at each crease line $c_i$ we will refer to the \emph{folding angle} of $c_i$, denoted $\rho_i$, to be the signed angle by which the  two faces (2-cells) adjacent to $c_i$ deviates from a flat plane in the image $f(M)$.  We say that $c_i$ is a \emph{valley} (resp. \emph{mountain}) crease if $\rho_i>0$ (resp. $\rho_i<0$).  The information of which creases are mountains and which are valleys can be thought of as a mapping from the crease lines to the set $\{$M, V$, 0\}$ and is called the \emph{mountain-valley (MV) assignment} of the rigid origami folding.  The label ``0" is reserved for the case of \emph{optional creases} which have a folding angle of $\rho_i=0$.  Such creases are not technically part of $\Sigma(f)$, but later on we will want to include optional creases as part of the combinatorial structure in our complexity proofs.
\end{definition}

Let us denote by $R_{c_i}(\rho_i)$ the rotation matrix in ${\mathbb R}^3$ about a crease line $c_i\in\Sigma(f)$ by angle $\rho_i$ if we imagine $M$ to be embedded in the $xy$-plane of $\mathbb{R}^3$.  Now let $\gamma$ be any simple, closed, vertex-avoiding curve drawn on $M$ that crosses, in order, the crease lines $c_1,\ldots c_n$ in $\Sigma(f)$.  A necessary (but not sufficient, since the paper may cross itself) condition for $f$ to be a rigid origami folding is
\begin{equation}\label{eq1}
R_{c_1}(\rho_1)R_{c_2}(\rho_2)\cdots R_{c_n}(\rho_n)=I
\end{equation}
for all such curves $\gamma$ (see \cite{belcastro}).  Note, however, that when folding by a small amount from the flat, unfolded state (i.e., when $\rho_i=0$ for all $i$), there will be no chance of self-intersection. Under these conditions, which are the ones we will be using in this paper, Equation~\eqref{eq1} will be a necessary and sufficient condition for rigid foldability.

\begin{definition}[Rigid Folding Motion]
Equation~\eqref{eq1} can be used to compute relationships among the folding angles $\rho_i$, while rigid-bar and spherical kinematics \cite{Chiang} can be applied to compute the degrees of freedom of the system of equations made by the creases in $\Sigma(f)$.  
The result is a \emph{continuously parameterized family} of rigid origami folded states that describe the rigid folding process from one folded state to another folded state (an immersion in ${\mathbb R}^3$).   
We also call this continuous parameterized family a \emph{rigid folding motion}.
\end{definition}

\begin{definition}[Rigid Foldability]
We then say that a crease pattern $\Sigma(f)$ is \emph{rigidly foldable} between two isometric foldings $f:M\subset{\mathbb R}^2 \rightarrow {\mathbb R}^3$ and $g:M\subset{\mathbb R}^2 \rightarrow {\mathbb R}^3$ if there exists a rigid folding motion $f_{\vec{v}}:M\rightarrow{\mathbb R}^3$ between between $f$ and $g$, where $\vec{v}$ is the parameter vector, $\vec{v}=\vec{0}$ is the unfolded state, and $\vec{v}=\langle \rho_1, \ldots \rho_n\rangle$ is the set of parameters that achieve $f$.
In other words, there exists a homotopy $H_t:M\times [0,1] \rightarrow f(M)$ such that $H_t$ is a rigid origami folded state with $\Sigma(H_t)=\Sigma(f)$ for all $t\in[0,1]$, and $H_t(0) = f$ and $H_t(1) = g$.
\end{definition}

\subsection{Computational Model}

To formally state the computational problems we consider, we need to deal
with the limitation of digital computers which cannot exactly represent real
numbers.  Thus, we need to specify a finite digital representation of crease
patterns, and we may need to tolerate some error in deciding whether a
rigid motion is actually valid.  For now, we define exact versions of the
problems (with no error), and leave approximate versions for
Definition~\ref{approximate definition}.

\begin{definition}[Rigid Foldability from Flat State]
\label{exact definition}
We define two decision problems.  In both cases, we are given
a straight-line planar graph drawing $G$ on a polygon $M\subset{\mathbb R}^2$,
where all vertex coordinates are specified by rational numbers.

\begin{enumerate}
\item
\emph{Rigid foldability using all creases}:  Is there a rigid folding motion from the trivial (unfolded) folding $f=\textrm{identity}$ to an isometric folding $g:M \rightarrow {\mathbb R}^3$ ($f\neq g$) whose homotopy $H_t$ satisfies $\Sigma(H_t)=G$ for all $t \in (0,1]$ and every fold angle of the crease is strictly increasing or strictly decreasing?

\item
\emph{Rigid foldability with optional creases}:  Is there a nonempty subset $G'\subset G$ with a positive (``yes'') answer to the rigid foldability problem using all creases in~$G'$?
\end{enumerate}
\end{definition}

The rigid foldability using all creases question for single-vertex crease patterns is characterized in \cite{ACDEHKLT}.  
This paper considers the both former and latter questions, and we prove that it is weakly NP-hard for the former case and NP-hard for the latter case for a general crease pattern with multiple vertices.

Note that there are mechanical aspects and challenges to these questions that we are not considering here.  
It is known that in the flat-unfolded state of a rigidly-foldable crease pattern with at least one vertex will often lie at a singular point in the configuration space parameterized by the degrees of freedom of the model (e.g., see \cite{SilverbergScience}).  
The above foldability questions only consider if it is {\em possible} to travel in the configuration space of rigid folds from the unfolded state to some folded state, not whether it is mechanically feasible to, say, program a rigid, self-folding material to do so.  Thus, we are considering only the possibility of the so-called {\em finite rigid foldability from a planar state}.  

\section{Rigid Origami Basics}
\subsection{Quadrivalent Vertex}
We use \emph{quadrivalent flat-foldable vertex} as the building block of our gadgets.
A quadrivalent flat-foldable vertex is an origami vertex with four creases $c_i$ ($i=0,1,2,3 \mod 4$) in counterclockwise order with supplementary opposite sector angles, i.e.,
$$
(\theta_{0,1}, \theta_{1,2},\theta_{2,3},\theta_{3,0})=(\alpha, \beta, \pi-\alpha, \pi-\beta)
$$
($0<\alpha, \beta<\pi$) where $\theta_{i,i+1}$ is the sector angle between creases $c_i$ and $c_{i+1}$.
Let $\rho_{i}$ denote the fold angle of crease $c_i$, where fold angles are signed angles between the normals of the facets.

This type of origami vertex is known to satisfy the interesting and useful property that tangent of half fold angles $\tan{\rho_i \over 2} = t_i$ are proportional to each other \cite{Bricard,Huff,Hull,LangTwist15,Tachi:2010:bdffpq}.  Since this result is fundamental to our main result, we present an explicit proof not given in the references.

\begin{theorem}\label{thm:tan}
Any degree-4 flat-foldable vertex is rigidly foldable.
The configuration space of such a vertex, 
represented by the tangent of half the fold angles, i.e, $t_i=\tan(\frac{\rho_i}{2})$,
is the union of configurations satisfying: 
\begin{align}
\left(t_0,t_1,t_2,t_3\right) &=
\begin{cases}
\left(t, -p_a(\alpha,\beta) t, t, p_a(\alpha,\beta) t\right) & \textrm{Case (a)}\\
\left(-p_b(\alpha,\beta)t, t, p_b(\alpha,\beta) t, t \right) &  \textrm{Case (b)}
\end{cases}
\label{eq:folding-modes}
\end{align}
where $p_a$ and $p_b$ are constants defined by
\begin{align}
p_a\left(\alpha,\beta\right)
&= {1-\tan{\alpha\over 2}\tan{\beta\over 2}\over 1+\tan{\alpha\over 2}\tan{\beta\over 2}}
\label{eq:p-a}
\\
p_b\left(\alpha,\beta\right)
&= {\tan{\beta\over 2} -\tan{\alpha\over 2} \over \tan{\beta\over 2}+\tan{\alpha\over 2}}.
\label{eq:p-b}
\end{align}
\end{theorem}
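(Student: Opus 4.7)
The plan is to reduce the loop-closure condition $R_{c_0}(\rho_0)R_{c_1}(\rho_1)R_{c_2}(\rho_2)R_{c_3}(\rho_3)=I$ from Equation~\eqref{eq1} to a polynomial system in the half-angle tangents $t_i=\tan(\rho_i/2)$ via the Weierstrass substitution, and then show that the flat-foldable structure $(\alpha,\beta,\pi-\alpha,\pi-\beta)$ forces this system to factor into exactly the two one-parameter branches claimed.

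First I would fix coordinates: place the vertex at the origin with $M$ in the $xy$-plane, and let crease $c_i$ make angle $\phi_i$ with the positive $x$-axis, where the consecutive sector angles are $\alpha,\beta,\pi-\alpha,\pi-\beta$. Equivalently (and I think more cleanly) I would pass to the spherical linkage picture: intersecting a small sphere around the vertex with $M$ yields a spherical quadrilateral whose edge lengths are the sector angles, and the loop-closure condition is equivalent to closure of this quadrilateral after rotating by $\rho_i$ at the $i$th corner. Either way, the nontrivial entries of the matrix identity (or the spherical law of cosines applied along a diagonal) yield a small set of scalar equations, and substituting $\cos\rho_i=(1-t_i^2)/(1+t_i^2)$ and $\sin\rho_i=2t_i/(1+t_i^2)$ turns them into polynomial equations in $t_0,\ldots,t_3$ whose coefficients are polynomial in $\tan(\alpha/2)$ and $\tan(\beta/2)$.

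Next I would exploit the supplementary structure: because opposite sector angles sum to $\pi$, cutting the spherical quadrilateral along a diagonal produces two spherical triangles that share edge lengths in a symmetric way, and this symmetry is the geometric reason a degree-4 flat-foldable vertex admits precisely two folding modes. Algebraically, I expect the resulting polynomial to split as a product of two linear factors in the $t_i$, one vanishing when $t_0=t_2$ (Case (a)) and one when $t_1=t_3$ (Case (b)). Substituting each branch condition back into the remaining equation and simplifying using the tangent-sum and tangent-difference identities for $(\alpha\pm\beta)/2$ produces the closed forms $p_a(\alpha,\beta)$ and $p_b(\alpha,\beta)$ stated in \eqref{eq:p-a}--\eqref{eq:p-b}. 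Rigid foldability of the vertex is then immediate: each branch is a smooth one-parameter curve passing through the unfolded state $(t_0,t_1,t_2,t_3)=(0,0,0,0)$, so any sufficiently small nonzero parameter value on either branch gives a nontrivial rigid folding motion.

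The main obstacle I anticipate is performing the factorization transparently rather than by brute algebraic expansion. The spherical-diagonal viewpoint seems essential here because it explains \emph{why} the solution set splits into precisely two branches indexed by the two pairs of opposite creases, and it suggests the right substitutions to try; once the factorization is in hand, extracting $p_a$ and $p_b$ reduces to a short manipulation with standard half-angle identities.
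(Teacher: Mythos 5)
Your plan follows essentially the same route as the paper's proof: write the loop-closure matrix identity for sector angles $(\alpha,\beta,\pi-\alpha,\pi-\beta)$, apply the Weierstrass substitution $t_i=\tan(\rho_i/2)$, and split the solution set into the two branches $t_0=t_2$ and $t_1=t_3$ --- a split the paper isolates as Lemma~\ref{lemma1} and proves by exactly the spherical-diagonal (law of cosines) argument you sketch --- before extracting $p_a$ and $p_b$ from a single scalar entry of the matrix equation via half-angle identities. The factorization you anticipate does indeed go through, so the proposal is correct and matches the paper's approach.
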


Figure~\ref{fig:single-vertex} shows Cases (a) and (b).
Note that $0 \leq|p_a| <1$ and $0 \leq|p_b| <1$.
Also, if $\alpha$ is strictly smaller than other sector angles, $0< p_a  <1$ and $0<p_b<1$.

\begin{figure}[t]
	\includegraphics[width=\linewidth]{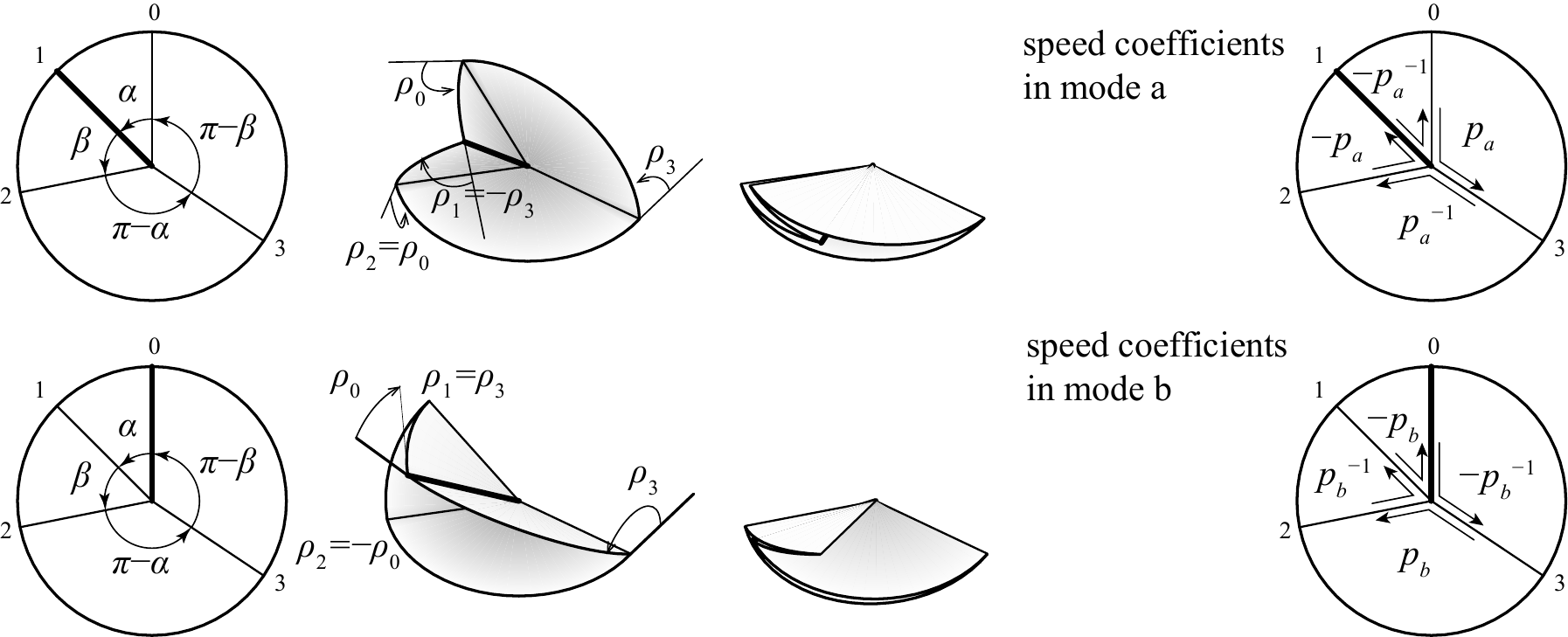}
	\centering
	\caption{Kinematics of degree-4 vertex with flat-foldability. Note that a single vertex has two modes of one-DOF motions (A) and (B).}
	\label{fig:single-vertex}
\end{figure}

To prove Theorem~\ref{thm:tan} we use the following Lemma:
\begin{lemma}\label{lemma1}
The fold angles of a quadrivalent flat-foldable vertex must satisfy
$$\rho_0=\rho_2\mbox{ and }\rho_1=-\rho_3\mbox{ or }\rho_1=\rho_3\mbox{ and }\rho_0=-\rho_2.$$
\end{lemma}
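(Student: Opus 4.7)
The plan is to read equation~(\ref{eq1}) around a small loop encircling the vertex as the closure condition for a spherical quadrilateral on a small sphere centered at the vertex. The four crease rays meet this sphere at points $V_0, V_1, V_2, V_3$ forming a closed quadrilateral whose successive side lengths are the sector angles $\alpha, \beta, \pi-\alpha, \pi-\beta$, and whose interior angle at $V_i$ is $\pi - \rho_i$. Flat-foldability enters entirely through the supplementary identities $\cos(\pi-x)=-\cos x$ and $\sin(\pi-x)=\sin x$, so the strategy is to set up one classical spherical-trigonometry identity and let these identities collapse it.

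Concretely, I would draw the diagonal $V_0V_2$ of length $d$ and apply the spherical law of cosines in each sub-triangle. Triangle $V_0V_1V_2$ has sides $\alpha,\beta$ meeting at the angle $\pi-\rho_1$ opposite $d$, giving
\[
\cos d=\cos\alpha\cos\beta-\sin\alpha\sin\beta\,\cos\rho_1.
\]
Triangle $V_0V_3V_2$ has sides $\pi-\beta,\pi-\alpha$ meeting at the angle $\pi-\rho_3$ opposite $d$, and the supplementary identities reduce its right-hand side to exactly the same expression but with $\cos\rho_1$ replaced by $\cos\rho_3$. Equating yields $\cos\rho_1=\cos\rho_3$, hence $\rho_1=\pm\rho_3$. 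Repeating the argument with the other diagonal $V_1V_3$ gives $\rho_0=\pm\rho_2$.

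The step I expect to be the main obstacle is eliminating the two \emph{wrong} sign combinations from the four a priori possibilities. To finish, I would apply the spherical law of cosines once more to compute the sub-angles at $V_0$ and at $V_2$ inside each sub-triangle; the same supplementary identities force the sub-angle of $V_0V_1V_2$ at $V_0$ to be supplementary to the sub-angle of $V_0V_3V_2$ at $V_2$, and similarly for the other pair. Reassembling these sub-angles into the full interior angles $\pi-\rho_0$ and $\pi-\rho_2$ of the quadrilateral then yields $\rho_0=-\rho_2$ when the diagonal $V_0V_2$ lies interior to the (possibly non-convex) quadrilateral and $\rho_0=\rho_2$ when it lies exterior, and a parallel dichotomy at $V_1,V_3$ with $\rho_1,\rho_3$. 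A short consistency check (the closure condition near the flat state, or tracking orientation of the diagonal) shows that the interior/exterior choices on the two diagonals are linked in exactly opposite ways, leaving only the pairings $(\rho_0=\rho_2,\,\rho_1=-\rho_3)$ and $(\rho_0=-\rho_2,\,\rho_1=\rho_3)$. These are the two geometric configurations distinguished by which single crease carries the minority mountain-valley label, so they correspond precisely to cases~(a) and~(b) of Theorem~\ref{thm:tan}.
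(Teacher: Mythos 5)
Your overall route is the same as the paper's (the paper's own justification is only a sketch: apply the spherical law of cosines to the spherical quadrilateral cut out by the folded vertex, triangulated by a diagonal, deferring details to \cite{Huff,Hull12}). The first half of your argument is correct and matches that plan: the two applications of the law of cosines across the diagonal $V_0V_2$ give $\cos\rho_1=\cos\rho_3$, the other diagonal gives $\cos\rho_0=\cos\rho_2$, and your supplementary relations between the sub-angles are also right (writing $A_0,A_2$ for the angles of $V_0V_1V_2$ at $V_0,V_2$ and $B_0,B_2$ for those of $V_0V_3V_2$, one indeed gets $A_0+B_2=\pi$ and $A_2+B_0=\pi$). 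Up to that point you have $\rho_0=\pm\rho_2$ and $\rho_1=\pm\rho_3$.

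The gap is in the final paragraph, which is where the actual content of the lemma lies: excluding the sign patterns $\rho_0=\rho_2,\ \rho_1=\rho_3$ and $\rho_0=-\rho_2,\ \rho_1=-\rho_3$ (all nonzero). First, your stated dichotomy is not what the sub-angle bookkeeping gives. All the law-of-cosines data is unsigned, so the sign of $\rho_0+\rho_2$ enters only through whether the interior angle at each endpoint is $\pi-\rho_i$ or its reflex complement $\pi+\rho_i$: if the diagonal leaves $V_0$ and arrives at $V_2$ on the \emph{same} side of the closed curve (whether that common side is the region you call interior or the exterior one), the identities $A_0+B_2=A_2+B_0=\pi$ give $\rho_0=-\rho_2$; you get $\rho_0=\rho_2$ precisely when the endpoints are on \emph{opposite} sides (the diagonal crosses the curve), not ``when it lies exterior.'' There are also reflex sub-cases (a split piece exceeding $\pi$) that your reassembly never addresses. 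Second, and more seriously, the decisive claim that the configurations on the two diagonals are ``linked in exactly opposite ways'' is asserted, not proved, and it is essentially equivalent to the lemma itself. It is not a general fact about spherical quadrilaterals—for a convex one both diagonals lie in the interior, i.e.\ both are ``same-side''—so any proof of the linking must exploit the special flat-foldable side lengths $\alpha,\beta,\pi-\alpha,\pi-\beta$ (perimeter exactly $2\pi$), which your final step never uses. Nor can it be settled by ``the closure condition near the flat state'': the lemma, as used in the proof of Theorem~\ref{thm:tan}, must hold for an arbitrary rigidly folded state, not just states close to $\rho_i=0$, and arguing that every folded state connects to the flat state would be circular. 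As written, your proposal proves the absolute-value statement but not the pairing of signs, which is the heart of the lemma.
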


Lemma~\ref{lemma1} can be proven using the spherical law of cosines on the spherical polygon that the rigidly folded vertex cuts out of a sphere of radius 1 with the degree-4 vertex at its center.  (The spherical polygon needs to be triangulated by drawing a geodesic between the corners made by $\rho_1$ and $\rho_3$.  See \cite{Huff,Hull12} for details.)

\begin{proof} [Proof of Theorem~\ref{thm:tan}]

It is a straight-forward matter to verify that folding angles given in Equation~\eqref{eq:folding-modes} verify the rigid foldability condition in Equation~\eqref{eq1}, thus proving sufficiency of the  Theorem~\ref{thm:tan} conditions to rigidly fold a degree-4, flat-foldable vertex.  

To prove necessity, suppose we have a rigidly-folded state of the vertex.  Let $L_i$ denote the crease whose folding angle is $\rho_i$, and let us assume our vertex is at the origin, the unfolded vertex lies in the $xy$-plane in $\mathbb{R}^3$, and $L_3$ lies on the $x$-axis with $L_0$, $L_1$, and $L_2$ proceeding counterclockwise from there.  If we folded the vertex into our given rigid state while leaving the sector between $L_2$ and $L_3$ fixed in the $xy$-plane, then we may consecutively rotate each sector angle through, and each folding angle around, the positive $x$-axis to obtain a matrix equation as follows:  Let $R_x(\theta)$ and $R_z(\theta)$ denote the rotation matrices about the $x$- and $z$-axes by $\theta$, respectively. Then we have
$$R_z(\pi-\beta)R_x(\rho_0)R_z(\alpha)R_x(\rho_1)R_z(\beta)R_x(\rho_2)R_z(\pi-\alpha)R_x(\rho_3)=I.$$
We employ Lemma~\ref{lemma1} and bring some matrices to the other side to obtain
$$R_z(\pi-\beta)R_x(\rho_2)R_z(\alpha)R_x(-\rho_3)= R_x(-\rho_3)R_z(-(\pi-\alpha))R_x(-\rho_2)R_z(-\beta).$$
The 1st row, 2nd column entry of this matrix equation simplifies to
$$(\cos\rho_2-\cos\rho_3)\sin\alpha\cos\beta + (\cos\rho_2 \cos\rho_3 -1)\cos\alpha\sin\beta + \sin\rho_2 \sin\rho_3 \sin\beta=0.$$
We now let $t_2=\tan(\rho_2/2)$ and $t_3=\tan(\rho_3/2)$, which means that
$$\sin\rho_i = \frac{2t_i}{1+t_i^2}\mbox{ and }\cos\rho_i = \frac{1-t_i^2}{1+t_i^2}.$$
Substituting these into our condition and simplifying a bit we obtain
$$\frac{t_2^2-t_3^2}{(1+t_2^2)(1+t_3^2)}\sin\alpha\cos\beta + \frac{t_2^2+t_3^2}{(1+t_2^2)(1+t_3^2)}\cos\alpha\sin\beta
- \frac{2t_2t_3}{(1+t_2^2)(1+t_3^2)}\sin\beta=0. $$
Multiplying both sides by $(1+t_2^2)(1+t_3^2)$ and re-arranging further we get
$$2\sin\beta\frac{t_3}{t_2} = \left(1-\left(\frac{t_3}{t_2} \right)^2\right)\sin\alpha\cos\beta + \left( 1+\left(\frac{t_3}{t_2} \right)^2\right)\cos\alpha\sin\beta,$$
from which we solve for $t_3/t_2$ to obtain two solutions:
$$\frac{t_3}{t_2}  = \frac{\cos\left(\frac{\alpha+\beta}{2}\right)}{\cos\left(\frac{\alpha-\beta}{2}\right)}
= \frac{1-\tan\frac{\alpha}{2}\tan\frac{\beta}{2}}{1+\tan\frac{\alpha}{2}\tan\frac{\beta}{2}}\mbox{ and }
\frac{t_2}{t_3}=-\frac{\sin\left(\frac{\alpha+\beta}{2}\right)}{\sin\left(\frac{\alpha-\beta}{2}\right)}
=\frac{\tan\frac{\beta}{2}-\tan\frac{\alpha}{2}}{\tan\frac{\beta}{2}+\tan\frac{\alpha}{2}}.$$
Thus the folding angles of our rigidly folded state satisfy the Equation~\eqref{eq:folding-modes} formulas, where Lemma~\ref{lemma1} verifies that the mountain and valley creases must match one of the two modes shown in Figure~\ref{fig:single-vertex}.
\end{proof}

Theorem~\ref{thm:tan} implies that the fold angles of a degree-4, flat-foldable vertex are proportional to each other when parameterized by the tangent of the half angle.
Also, the configuration space curves determined by Equation~\eqref{eq:folding-modes} share exactly one configuration point $\vec t = \vec 0$ (the flat state).  At this point the configuration space branches to two rigid folding modes, each of which is one-DOF.
We can determine the folding motion of the vertex by specifying the mode and the fold angle of one of the creases.

\begin{definition}[Speed coefficient]
For a pair of adjacent creases $e_i$ and $e_j$, we call $p(e_i, e_j) := \tan\frac{\rho_{e_i}}{2}/ \tan\frac{\rho_{e_j}}{2}$ the \emph{speed coefficient} from $e_j$ to $e_i$.
\end{definition}
The speed coefficients are also known as the {\em fold-angle multipliers} \cite{LangTwist15}. By the choice of modes (a) and (b) in Equation~\eqref{eq:folding-modes}, the speed coefficient between creases is determined.
\begin{align*}
p(0, 1) &= -p_a^{-1} \textrm{ or } -p_b\\ 
p(1, 2) &= -p_a \textrm{ or } p_b^{-1} \\
p(2, 3) &= p_a^{-1} \textrm{ or } p_b\\
p(3, 0) &= p_a \textrm{ or } -p_b^{-1}
\end{align*}

Especially, for a special case of $\beta = 90\degrees$, the fold speed coefficient can be written using single variable
$$
p = p(\alpha) = p_a(\alpha,90\degrees) = p_b(\alpha,90\degrees) = {1 -\tan{\alpha\over 2} \over 1+\tan{\alpha\over 2}}.
$$
\begin{align*}
p(0, 1) &= -p^{-1} \textrm{ or } -p\\ 
p(1, 2) &= -p \textrm{ or } p^{-1} \\
p(2, 3) &= p^{-1} \textrm{ or }  p\\
p(3, 0) &= p \textrm{ or } -p^{-1}
\end{align*}
Here, if we chose $\alpha = \arctan {3\over 4}$, then we obtain the folding speed coefficient $p = {1 \over 2}$.

Theorem~\ref{thm:tan} gives us a continuously parameterized family of rigid origami foldings for degree-$4$, flat-foldable vertices.  Thus we have the following.
\begin{lemma}\label{lemma-assign}
There are exactly $4$ valid MV assignments (including zero assignment for the crease not folded) for each vertex from the flat state, and each assignment uniquely restricts the configuration space to a $1$ parameter curve, i.e., $1$-DOF mechanism.
\end{lemma}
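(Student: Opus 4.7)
The plan is to derive the lemma directly from Theorem~\ref{thm:tan}, which already presents the configuration space as the union of two 1-parameter curves; what remains is to check that these curves, together with the sign of the parameter $t$, realize exactly four distinct MV assignments and no others. I would first restrict to the nondegenerate case where $\alpha$ is strictly smaller than the other three sector angles, so that $0 < p_a, p_b < 1$ (both nonzero and not equal to $\pm 1$); the remark just before the lemma notes this is the generic setting, and the degenerate values of $p_a,p_b$ collapse creases and can be discussed separately.

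Next I would read off the sign patterns of $(t_0,t_1,t_2,t_3)$ in each mode. From $(t,\,-p_a t,\,t,\,p_a t)$, mode (a) with $t>0$ yields signs $(+,-,+,+)$ and with $t<0$ yields $(-,+,-,-)$; from $(-p_b t,\,t,\,p_b t,\,t)$, mode (b) with $t>0$ yields $(-,+,+,+)$ and with $t<0$ yields $(+,-,-,-)$. Because $\tan(\rho_i/2)$ has the same sign as $\rho_i$, these four sign tuples translate to four MV assignments, each a tripod pattern (three creases of one parity and one of the other). The four tuples are pairwise distinct, and in particular modes (a) and (b) produce disjoint sign patterns, so an MV assignment uniquely determines both the mode and the sign of $t$, restricting the configuration to a single half-curve parameterized by $|t|$; this is precisely a 1-DOF mechanism.

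Finally I would argue that no additional assignment, including any with a $0$-label, can arise. Theorem~\ref{thm:tan} is an ``if and only if'' classification of rigidly folded states, so every valid assignment appears in the enumeration above. If some $\rho_i$ were assigned $0$, then $t_i=0$, and under nondegeneracy each mode equation then forces $t=0$, returning us to the flat state; hence no MV assignment with a $0$-label produces a nontrivial rigid folding, and the total count is exactly four. The main subtlety is simply the sign bookkeeping and confirming that the nondegeneracy of $p_a$ and $p_b$ prevents coincidences between modes; once those are in hand, the lemma follows immediately from the case enumeration.
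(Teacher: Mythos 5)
Your proposal is correct and follows essentially the same route as the paper's proof: both enumerate the four folding paths given by the two binary choices in Theorem~\ref{thm:tan} (mode (a)/(b) and the sign of $t$), check that the resulting sign patterns of $(t_0,t_1,t_2,t_3)$ are pairwise distinct, and conclude that each MV assignment picks out a unique $1$-parameter path. The only caveat is that you defer the degenerate case $p_a=0$ or $p_b=0$ (e.g.\ the perpendicular-crease vertex of Figure~\ref{fig:overlay}, and the $\alpha=\beta$ and $90^\circ$-sector vertices the paper later uses), which is exactly what the lemma's parenthetical about ``zero assignment'' is meant to cover; the same enumeration goes through there with zeros appearing in the sign patterns, so this is a one-line addition rather than a flaw.
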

\begin{proof}
The two binary options, (1) choice of modes (a) and (b) and (2) the sign of $t$, give four folding paths, along which the signs of fold angles (assignment) are unchanged by Equation~\eqref{eq:folding-modes}.
The opposite paths in the same mode have a reversed MV assignment of the other, and the paths in different modes have different assignment because edges $0$ and $3$ have the same (non-zero) assignment in mode (a) and the opposite assignment in mode (b).
So, each MV assignment gives a unique $1$-parameter folding path in the configuration space.
\end{proof}

\subsection{Assembly: Flat-foldable Quadrivalent Origami}

We call a pattern composed solely of flat-foldable degree-$4$ vertices a \emph{flat-foldable quadrivalent mesh}.
For our hardness proof of rigid origami, we will use flat-foldable quadrivalent mesh (with optional creases).
Determining the rigid foldability of flat-foldable quadrivalent meshes is a problem of assigning modes (a) or (b) for each vertex in such a way that  the folding speeds of the creases will not be in conflict.  Note that because we are only interested in rigidly folding our crease pattern a finite amount from the flat, unfolded state, we do not need to worry about the possibility of different parts of the paper colliding or self-intersecting.

The following directly follows from Lemma~\ref{lemma-assign} and the definition of  speed coefficients.  (See also \cite{LangTwist15}.)

\begin{corollary}[Assignment Problem]
For a MV assignment of flat-foldable quadrivalent mesh, there is at most one folding path that forms a $1$-manifold in the configuration space. 
\end{corollary}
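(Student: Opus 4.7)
The plan is to show that once an MV assignment is fixed on the flat-foldable quadrivalent mesh, every fold angle in the pattern becomes a determined function of a single real parameter, so the set of valid folding motions realizing that assignment lies on a 1-manifold (or is empty).

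First, I would apply Lemma~\ref{lemma-assign} locally at every vertex: the restriction of the global MV assignment to the four creases incident to a given vertex $v$ is one of the four valid single-vertex assignments, and therefore uniquely selects one of the two modes (a) or (b) at $v$. Once the mode is fixed, the speed coefficients $p(e_i,e_j)$ between every pair of adjacent creases at $v$ are determined constants, as recorded in the tables following Theorem~\ref{thm:tan}.

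Next, I would propagate this local information globally. Pick any reference crease $c^\star$ and parameterize the candidate motion by $t^\star = \tan(\rho_{c^\star}/2)$. For any other crease $c$ in the mesh, choose a sequence of creases $c^\star = e_0, e_1, \ldots, e_k = c$ in which consecutive creases share a vertex of the mesh. At each vertex $v_j$ shared by $e_{j-1}$ and $e_j$, the mode has already been pinned down in the previous step, so the ratio $\tan(\rho_{e_j}/2)/\tan(\rho_{e_{j-1}}/2)$ equals the known constant $p(e_j, e_{j-1})$ from the mode chosen at $v_j$. Telescoping along the path gives
\[
\tan\!\tfrac{\rho_c}{2} \;=\; \left(\prod_{j=1}^{k} p(e_j, e_{j-1})\right) \tan\!\tfrac{\rho_{c^\star}}{2},
\]
so every fold angle in the mesh is a determined constant multiple (in the tangent-half-angle sense) of the single parameter $t^\star$. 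Consequently, the locus of rigidly folded states realizing the given MV assignment is contained in the image of the one-parameter map $t^\star \mapsto (\rho_e)_e$, i.e.\ a 1-manifold parameterized by $t^\star$.

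The one thing left is to address the word "at most". If the graph contains cycles, different choices of path from $c^\star$ to $c$ must yield the same product of speed coefficients in order for a valid motion to exist. If the products disagree for some cycle, the only common solution is the trivial $t^\star = 0$ (the flat state), and no non-degenerate folding path exists; if they agree for every cycle, the single parameter $t^\star$ traces out exactly one 1-manifold in configuration space. Either way, at most one 1-manifold folding path is produced, which is precisely the statement of the corollary. The one step that takes real care, and which I would flag as the main subtlety, is the transition from local mode determination at each vertex to the global consistency argument along cycles; Lemma~\ref{lemma-assign} and the explicit speed coefficients following Theorem~\ref{thm:tan} are exactly the tools that make this propagation work cleanly.
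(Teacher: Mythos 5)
Your proposal is correct and follows essentially the same route as the paper, which states that the corollary ``directly follows'' from Lemma~\ref{lemma-assign} and the definition of speed coefficients: fix the mode at each vertex via the MV assignment, propagate the resulting constant speed coefficients along the mesh so that every fold angle is a fixed multiple (in tangent-half-angle terms) of one parameter, and note that cycle inconsistency can only eliminate, never multiply, the folding path. Your write-up simply makes explicit the propagation and cycle-consistency details that the paper leaves implicit.
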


\begin{corollary}[Closure Condition] \label{th:degree4closure}
A MV assignment of flat-foldable quadrivalent mesh yields a rigid folding motion if and only if for each $k$-gonal face $f$ surrounded by creases $c_i$ ($i=0, \dots, k-1$), the speed coefficients $p_i = p(c_{i+1}, c_i)$ satisfy the following:
\begin{equation}
\prod_{i=0, \dots, k-1} p_i = 1.
\label{eq:degree4closure}
\end{equation}
\end{corollary}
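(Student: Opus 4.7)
The plan is to derive the corollary as a direct consequence of Theorem~\ref{thm:tan} and Lemma~\ref{lemma-assign} by tracking the tangent half-angles $t_c := \tan(\rho_c/2)$ through the mesh as a linear system on the edges. Fixing a MV assignment, Lemma~\ref{lemma-assign} selects a unique mode at each vertex, and Theorem~\ref{thm:tan} then says that at every vertex $v$ any two cyclically consecutive incident creases $c,c'$ satisfy the linear relation $t_{c'} = p(c',c)\,t_c$. A rigid folding motion from the flat state is therefore, by Equation~\eqref{eq1} and the remark that self-intersection cannot occur near the unfolded state, equivalent to a nonzero solution of this homogeneous linear system on the edges of the mesh.

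For necessity, I would fix a $k$-gonal face $f$ bounded in cyclic order by creases $c_0,\ldots,c_{k-1}$, where consecutive $c_i,c_{i+1}$ meet at the corresponding corner vertex of $f$. Chaining the $k$ vertex relations around $\partial f$ yields $t_{c_0} = p_{k-1} p_{k-2}\cdots p_0\,t_{c_0}$. Since the folding motion is nontrivial, the system admits a solution with $t_{c_0}\neq 0$, forcing $\prod_{i=0}^{k-1} p_i = 1$.

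For sufficiency, assuming the face products are all~$1$, I would pick a seed crease $c^{*}$, set $t_{c^{*}} = t$ for a small real parameter $t$, and define $t_c$ at every other crease by multiplying the speed coefficients along some walk of face-corners from $c^{*}$ to $c$. Any two such walks differ by a closed walk, and closed walks in the mesh graph are generated in the edge-cycle space by face-boundary cycles; since each face contributes a factor~$1$ by hypothesis, $t_c$ is path-independent. The per-vertex relations are then automatic: Theorem~\ref{thm:tan} itself implies that the four speed coefficients around any single vertex multiply to~$1$ (as can be read off Equation~\eqref{eq:folding-modes}), so vertex consistency is already subsumed by face consistency. Taking $|t|$ small enough keeps us in the neighborhood where Equation~\eqref{eq1} suffices for rigid foldability and no self-intersection occurs, yielding a valid nontrivial rigid folding motion.

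The only delicate point is the path-independence step, which rests on the standard fact that the cycle space of a finite graph embedded in the polygon $M$ is spanned by its face-boundary cycles; this deserves a sentence but is not a substantive obstacle. Sign bookkeeping for the $p_i$ requires no special care, since each $p_i$ is a real scalar and signs multiply through products just like magnitudes.
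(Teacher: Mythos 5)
Your proof is correct and is essentially the argument the paper intends: the paper offers no explicit proof, asserting only that the corollary ``directly follows'' from Lemma~\ref{lemma-assign} and the definition of speed coefficients, and your necessity-by-chaining-around-a-face plus sufficiency-by-propagation-and-cycle-space argument is the natural way to fill that in. The only slight imprecision is that the relevant walks are walks of vertex-corners (edges of the medial structure of the crease pattern) rather than of the crease-pattern graph itself, but since you account for both types of generating cycles---face boundaries (product $1$ by hypothesis) and single-vertex cycles (product $1$ automatically, readable from Equation~\eqref{eq:folding-modes})---the argument goes through.
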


Equation~\ref{eq:degree4closure} provides a natural metric for measuring
error in a candidate rigid folding motion.

\begin{definition}[Finite-Precision Degree-4 Flat-Foldable Rigid Foldability]
\label{approximate definition}
As in Definition~\ref{exact definition}, we define two decision problems
whose primary input is a straight-line planar graph drawing $G$ on a polygon
$M\subset{\mathbb R}^2$,
where all vertex coordinates are specified by rational numbers.
Now $G$ is constrained to be flat foldable and have all vertices of degree~$4$,
and we have an additional input $\epsilon > 0$ (specified as a rational number).

\begin{enumerate}
\item
\emph{Rigid foldability using all creases}:  Is there a mountain--valley assignment whose induced fold-angle multipliers satisfy, at each vertex,
\begin{equation}
\label{eq:degree4closure-epsilon}
\prod_{i=0, \dots, k-1} p_i \in [1-\epsilon,1+\epsilon]?
\end{equation}

\item \emph{Rigid foldability with optional creases}:  Is there a nonempty subset $G'\subset G$ with a positive (``yes'') answer to the finite-precision rigid foldability problem using all creases in~$G'$?
\end{enumerate}
\end{definition}

\begin{theorem}[Finite-Precision Rigid Foldability is in NP]
\label{deg4-np}
Both finite-precision degree-4 flat-foldable rigid foldability problems
are in NP, even with $\epsilon = 0$ (exact precision).
\end{theorem}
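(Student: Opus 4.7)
The plan is to exhibit a polynomial-size NP certificate and a polynomial-time verifier that invokes Lemma~\ref{lemma-assign} and Corollary~\ref{th:degree4closure} (extended to its $\epsilon$-relaxed version in Equation~\ref{eq:degree4closure-epsilon}). For both variants, the witness is a labeling $\mu: E(G) \to \{M, V, 0\}$ of the creases, where ``$0$'' marks the creases excluded from $G'$ in the optional-creases variant (and is forbidden in the all-creases variant). The witness has size $O(|E(G)|)$, polynomial in the input.

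The verifier proceeds in two stages. First, at each vertex $v$, it checks that the restriction of $\mu$ to the creases at $v$ matches one of the four valid nonzero MV patterns from Lemma~\ref{lemma-assign} (with appropriate local handling when the vertex has reduced effective degree after deleting $0$-labeled creases in the optional variant). This is a constant-time test per vertex. Second, for each face $f$ of the subgraph induced by the nonzero creases, it computes the product $\prod_i p_i$ of speed coefficients around the boundary of $f$ using the values of $\pm p_a$ or $\pm p_b$ selected by $\mu$ and the sector angles at each corner (which are determined by the rational input coordinates), and it tests whether the product lies in $[1-\epsilon, 1+\epsilon]$. By Corollary~\ref{th:degree4closure} (with the $\epsilon$-relaxation), the certificate is accepted iff a valid rigid folding motion exists.

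The main obstacle is evaluating $\prod_i p_i$ in polynomial time, since $p_a(\alpha,\beta)$ and $p_b(\alpha,\beta)$ involve half-angle tangents of sector angles between rationally-specified edges, so each $p_i$ is an algebraic number containing square roots of the form $\sqrt{a^2+b^2}$ on edge vectors. For any $\epsilon > 0$, I will compute each $p_i$ in floating-point arithmetic with precision $O(\log(1/\epsilon) + \mathrm{poly}(|G|, L))$ bits, where $L$ is the input bit length; since each face has at most $|E(G)|$ edges on its boundary, the accumulated numerical error can be kept well below $\epsilon$, so the interval test is decided unambiguously in polynomial time.

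The case $\epsilon = 0$ is the subtlest step and requires certifying exact equality. I will appeal to an algebraic separation bound: $\prod_i p_i$ is built by a polynomial-size arithmetic expression over the rational inputs involving only polynomially many quadratic irrationalities $\sqrt{a^2+b^2}$, so either $\prod_i p_i = 1$ identically or $|\prod_i p_i - 1| \geq 2^{-\mathrm{poly}(|G|, L)}$. Hence the same floating-point strategy at polynomial precision still separates the two cases, and the verifier runs in polynomial time even when $\epsilon = 0$. Together these two cases place both finite-precision problems in NP.
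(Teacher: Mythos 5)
Your overall structure matches the paper's: the certificate is the mountain--valley assignment (plus the subset $G'$ for the optional variant), and the verifier checks that the induced fold-angle multipliers satisfy the product condition of Equation~\eqref{eq:degree4closure-epsilon}, with the only real work being the comparison of an algebraic expression involving square roots against $1\pm\epsilon$. The genuine gap is in how you handle that comparison when $\epsilon=0$. You compute the product around each \emph{face} of the folded subgraph, so the expression can contain $\Theta(|E(G)|)$ distinct quadratic irrationalities $\sqrt{a^2+b^2}$ (faces are not of bounded size; the central face in the Partition reduction of Theorem~\ref{usingall} is an $(n+2)$-gon). Your claimed dichotomy --- either $\prod_i p_i = 1$ exactly or $\left|\prod_i p_i - 1\right| \geq 2^{-\mathrm{poly}(|G|,L)}$ --- is not a known separation bound in this regime: the standard bounds for radical expressions (BFMS/BFMSS, Li--Yap, and the Burnikel et al.\ bound the paper cites) degrade with the algebraic degree of the expression, which is $2^{k}$ for $k$ square-root nodes, so the guaranteed gap is only of bit-length exponential in $k$. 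Deciding such comparisons with polynomially many radicals in polynomial time is essentially the (open) sum-of-square-roots type problem, so ``floating point at polynomial precision'' does not follow. The same issue resurfaces for $\epsilon>0$ at the boundary: if the product happens to equal $1\pm\epsilon$ exactly, fixed-precision arithmetic cannot decide membership in the closed interval, so you need exact comparison there too rather than a purely numerical test.

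The paper avoids this by following Definition~\ref{approximate definition} literally: the product condition is checked vertex by vertex, so each expression to be compared has constant complexity (a constant number of square roots coming from the four sector half-angle tangents at one degree-$4$ vertex), and then exact comparison against $1\pm\epsilon$, even with $\epsilon=0$, is genuinely polynomial time by standard exact-computation techniques --- no polynomial separation bound over many radicals is ever needed. To repair your argument you should either (i) restrict the check to the constant-size per-vertex expressions of Definition~\ref{approximate definition} and replace the floating-point step by exact comparison of constant-degree algebraic numbers, or (ii) if you insist on per-face products (the actual closure condition of Corollary~\ref{th:degree4closure}), supply a real argument for why equality to $1$ can be decided in polynomial time for a product of polynomially many quadratic irrationals --- your current separation-bound assertion does not provide one. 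The remaining ingredients of your proposal (the $O(|E(G)|)$-size witness, the local check against Lemma~\ref{lemma-assign}, handling of $0$-labeled creases) are fine and consistent with the paper.
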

\begin{proof}
First observe that we can confirm that the given crease pattern is flat
foldable in~P.  At each vertex $\vec u$ with neighbors
$\vec v_1,\vec v_2,\vec v_3,\vec v_4$, we reflect $\vec v_1 - \vec u$ through
$\vec v_i - \vec u$ for $i \in \{2,3,4\}$, and then verify that
the resulting vector is identical to $v_1 - u$.
Here we use that the reflection $\vec p'$ of vector $\vec p$ through another
vector $\vec q$ can be done with $O(1)$ additions, multiplications, and
divisions:
\begin{equation} \label{eq:reflection}
  \vec p' = \vec p - 2 {(\vec p \cdot \vec q^\perp) \vec q^\perp \over
                        \vec q \cdot \vec q},
  \quad \text{where} \quad
  (x,y)^\perp = (-y,x).
\end{equation}

Next we nondeterministically guess the mountain--valley assignment and,
for the optional-crease problem, the subset $G'$ of creases.

It remains to verify Inequality~\eqref{eq:degree4closure-epsilon}
at each vertex.  Equations~\eqref{eq:p-a} and~\eqref{eq:p-b}
represent each $p_i$ term in the product as a rational function of tangents
of half-angles.  Recall the tangent half-angle formula:
$$
\tan {\theta \over 2} = \csc \theta - \cot \theta .
$$
The latter trigonometric functions can be represented by radical expressions
on the vertex coordinates, specifically, their absolute difference in $x$
coordinates, their absolute difference in $y$ coordinates, and their
Euclidean distance (which involves a square root).
Therefore the product of the $p_i$'s at a vertex can be represented as a
constant-complexity expression involving addition, subtraction, multiplication,
division, and square roots on the input rationals.
Comparing such an expression to $1 \pm \epsilon$ (even with $\epsilon = 0$)
can be done exactly in polynomial time; see, e.g., \cite{Burnikel2009}.
\end{proof}
Note that rigid foldability of general origami with non-degree-$4$ or non-flat-fordable vertices may not be in NP.

\subsection{The Square Twist Fold}
By combining four degree-$4$ vertex, we may obtain a rigid folding version of the ``square twist fold.'' 
Unlike the original square twist fold \cite{fujimoto,Hull12}, we choose the mountain valley assignment to allow rigid foldability.
Such a system can fold in two ways if we take the symmetry into account (\cite{LangTwist15,Tachi-Hull:2016}).

\begin{figure}[t]
	\includegraphics[width=.8\linewidth]{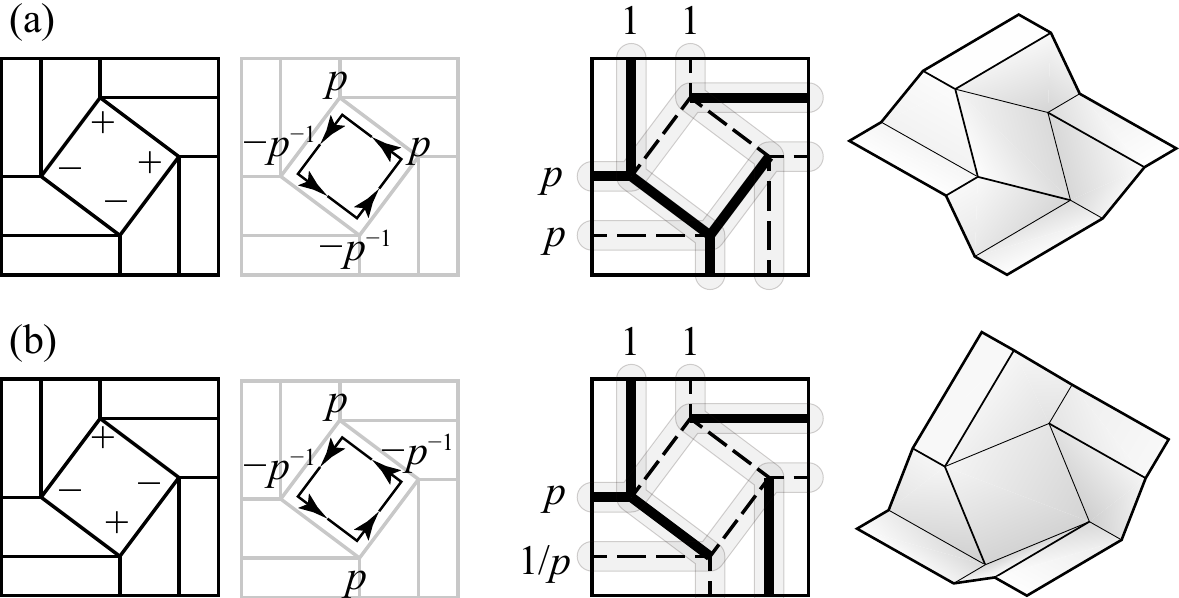}
	\centering
	\caption{Two different modes (a) and (b) of a square twist. Left two columns show the assignment of modes $+$ and $-$ to obtain the consistent loop, which gives the MV assignments in third column, which is folded to 3D state in the fourth column.
	The number $1$, $p$, $1/p$ represents the absolute folding speeds measured in tangent of half of fold angles of edges along the chain of opposite edges.}
	\label{fig:twist}
\end{figure}

\begin{lemma}\label{twist-mode}
There are only $4$ MV assignments and thus $2$ modes (up to symmetry) that allow the square twist to fold rigidly.  
They are the ones shown in Figure~\ref{fig:twist} along with the relative folding speed coefficients of the creases on the boundary.
\end{lemma}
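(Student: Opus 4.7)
The plan is to enumerate MV assignments by the mode (a or b) and sign choice at each of the four degree-4 vertices of the square twist, and eliminate those violating the Closure Condition (Corollary~\ref{th:degree4closure}) on some face. At each vertex the corner of the central square contributes a sector angle of $\beta = 90\degrees$, so Theorem~\ref{thm:tan} collapses with $p_a = p_b = p$, and the speed coefficient between the two central-square edges meeting at a given vertex is either $-p$ (if that vertex is in mode (a)) or $p^{-1}$ (if it is in mode (b)). By Lemma~\ref{lemma-assign}, fixing the mode and the sign of one parameter $t$ at a vertex determines all four fold-angle signs there, and on each edge of the central square the signs induced by the two endpoints must match.

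First I would write the closure equation around the central square face. Traversing $v_1 \to v_2 \to v_3 \to v_4 \to v_1$, each consecutive pair of modes $(m_i, m_{i+1}) \in \{(a,a),(a,b),(b,a),(b,b)\}$ contributes a ratio $r_{i,i+1} \in \{-p,\,1,\,-1,\,p^{-1}\}$ between the two endpoints' parameters. The product $\prod_{i=1}^{4} r_{i,i+1}$ must equal $1$, and a short transition-count argument (using that $n_{aa} + n_{ab} = n_{aa} + n_{ba}$ in a cyclic sequence) shows this forces, for generic $p$, exactly two vertices to be in mode (a) and two in mode (b), narrowing the $2^4 = 16$ mode patterns to six. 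Next I would apply the Closure Condition on each of the four outer flap faces; the boundary speed coefficients of each flap are fixed by the two central-square vertices it touches, since the outward creases are governed by the same Theorem~\ref{thm:tan} relations. These additional closure equations eliminate all but two of the six surviving mode patterns, yielding the two ``modes'' of the lemma (one alternating and one paired). Each of these two modes admits exactly two global sign choices (differing by reversing the sign of every $t$, i.e.\ globally swapping M with V), for a total of four MV assignments, matching Figure~\ref{fig:twist}.

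The main obstacle is careful bookkeeping. At each vertex, the formula relating the two central-square edges' fold angles depends on which local crease indices $c_0,\ldots,c_3$ those edges correspond to, and that in turn depends on the rotational orientation of the vertex within the twist. One must fix a global convention for traversing each face, track the signs coming from the $\pm$ factors in Theorem~\ref{thm:tan}, and then verify the outer-flap closure equations case by case. Once this is done, the four surviving MV assignments are exactly those depicted in Figure~\ref{fig:twist}, and the relative fold speeds along the chains of opposite boundary edges can be read off as $1$, $p$, $p^{-1}$ directly from the mode assignment at each vertex.
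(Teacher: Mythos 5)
Your first step is essentially the paper's proof: the paper also reduces the lemma to the single closure equation of Corollary~\ref{th:degree4closure} around the central square, with each vertex contributing $p(\alpha)$ (mode $+$) or $-1/p(\alpha)$ (mode $-$), and concludes that the product can be $1$ only when exactly two vertices take each mode, leaving the six patterns you count. Your second step, however, is a genuine error. There are no closure conditions on the four outer flap faces: in a single square twist every face other than the central square touches the boundary of the paper, so it is not ``surrounded by creases'' and Corollary~\ref{th:degree4closure} imposes no equation there. Consequently the claimed elimination of four of the six mode patterns is false --- all six patterns with two vertices in each mode satisfy the (only) closure constraint and are rigidly foldable, as is also forced by the $90\degrees$ rotational symmetry of the twist, which carries a valid rigid folding of $++--$ to a valid rigid folding of $+--+$, etc. The reduction from six patterns to the two ``modes'' of the lemma is a quotient by this symmetry ($++--$ and $+-+-$ are the orbit representatives shown in Figure~\ref{fig:twist}), not the result of additional equations. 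Your final tally of four MV assignments matches the lemma's statement only because the lemma counts up to symmetry; read literally, your argument would assert that rotated copies of the paired mode do not fold rigidly, which is wrong.

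A secondary, fixable issue: you attach a ratio to each ordered pair of consecutive vertex modes, with values in $\{-p,\,1,\,-1,\,p^{-1}\}$. The constraint around the central square is more cleanly stated per vertex: the speed coefficient between the two central-square creases meeting at a vertex depends only on that vertex's mode, and your telescoping product of vertex-parameter ratios reduces to exactly that product of per-vertex coefficients. The pair-indexed table is convention-dependent and invites precisely the sign-bookkeeping trouble you flag at the end; adopting the per-vertex coefficient (as the paper does) removes it. With that simplification, your magnitude-and-parity argument for ``two of each mode'' is correct since $0<p<1$; what remains to repair is only the incorrect appeal to outer-face closure, which should be replaced by the symmetry quotient.
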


\begin{proof}
Consider the central square composed of creases $c_0,c_1,c_2,c_3$.
The problem of assigning MV is equivalent to assigning the coefficient 
$$
p_{c_i,c_{i+1}} = 
\begin{cases}
p(\alpha) & \textrm{mode $+$}\\
-\frac 1 p(\alpha) & \textrm{mode $-$}
\end{cases},
$$
where $\alpha$ is the twist angle of the center square, i.e., the smallest sector angle at each vertex.
Equation~\eqref{eq:degree4closure} is satisfied if and only if the number of mode $+$'s is $2$.
The possible pattern is either $++--$ or $+-+-$ up to symmetry, which is as described in Figure~\ref{fig:twist}.
\end{proof}

\section{Rigid Foldability using All Creases is Weakly NP-hard}

We show how to reduce the Partition problem to finite-precision rigid
foldability using all creases, thereby establishing weak NP-hardness.

\begin{theorem}[Rigid Foldability is weakly NP-hard]
\label{usingall}
Finite-precision degree-4 flat-foldable rigid foldability using all the creases
is weakly NP-hard.
\end{theorem}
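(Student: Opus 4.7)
The plan is to reduce from Partition: given positive integers $a_1,\ldots,a_n$ with $\sum_i a_i = 2S$, decide whether some subset sums to $S$. In time polynomial in the \emph{unary} size $\sum_i a_i$, I will construct a flat-foldable quadrivalent crease pattern $G$, together with $\epsilon = 0$, whose MV assignments correspond to sign vectors $\vec\epsilon \in \{+1,-1\}^n$, and which is rigidly foldable using all creases if and only if $\sum_i \epsilon_i\, a_i = 0$. Combined with Theorem~\ref{deg4-np} for NP membership and the standard fact that Partition is weakly NP-hard, this gives the theorem.

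The building block is the square twist of Lemma~\ref{twist-mode}, specialized to $\beta = 90\degrees$ and $\alpha = \arctan(3/4)$, so the fold-angle multiplier $p = 1/2$ is rational and every vertex can be placed at rational coordinates. By Lemma~\ref{twist-mode} (and as indicated in Figure~\ref{fig:twist}), a single square twist has exactly two non-symmetric rigid folding modes; along two opposite sides of its bounding square the relative speed coefficient is $p$ in one mode and $p^{-1}$ in the other. Thus a single twist is a controllable $\pm 1$ switch on the exponent of~$p$ transmitted across it.

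For each $a_i$, I will string $a_i$ twists together into a \emph{chain gadget} by sharing boundary edges so that every pair of adjacent twists is also incident to a common quadrilateral face. Applying Corollary~\ref{th:degree4closure} to each such shared face forces adjacent twists to use the same mode, so the entire chain collapses to a single binary choice $\epsilon_i \in \{+1,-1\}$ and contributes a total speed factor of $p^{\epsilon_i a_i}$ across its two extreme sides. I will then connect the $n$ chain gadgets around the boundary of a single large outer face, arranged so that every other face in the pattern is locally balanced (its closure product is trivially $1$) and the outer face sees the concatenated product
\begin{equation*}
\prod_{i=1}^{n} p^{\epsilon_i a_i} \;=\; 2^{-\sum_i \epsilon_i a_i}.
\end{equation*}
By Corollary~\ref{th:degree4closure}, rigid foldability is equivalent to this product equalling $1$, i.e.\ to $\sum_i \epsilon_i a_i = 0$, which holds iff the $+1$ indices form a subset summing to $S$. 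Because each chain has only $a_i$ twists, the total crease pattern has $O(\sum_i a_i)$ vertices and can be written down in unary-polynomial time, yielding weak NP-hardness.

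The main obstacle is the \emph{geometric realization} of the chain and its closure gadget: I must verify that the twists can be laid out in the plane so that (a) the pattern is quadrivalent, flat-foldable, and has rational coordinates, (b) the interior shared faces really enforce chain consistency and no stronger condition, (c) the outer face sees only the advertised product with no spurious factors from its remaining edges, and (d) no other face's closure equation spuriously rules out any sign vector $\vec\epsilon$. With the rational choice $p = 1/2$, the closure equation reduces to an exact equality of integer powers of $2$, so the condition can be checked even with $\epsilon = 0$ and the reduction is robust against the finite-precision subtleties of Definition~\ref{approximate definition}.
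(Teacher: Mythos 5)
Your high-level plan mirrors the paper's (reduce from Partition, encode the subset choice as a per-element mode choice, and use the closure condition of Corollary~\ref{th:degree4closure} around one central face so that the product of speed coefficients is $1$ iff the partition balances), but the way you encode the numbers breaks the reduction. By insisting on the rational multiplier $p=\tfrac12$ and representing $a_i$ as a chain of $a_i$ unit twists, your crease pattern has $\Theta(\sum_i a_i)$ vertices, i.e., its size is polynomial only in the \emph{unary} encoding of the Partition instance and exponential in its binary encoding. That is a pseudo-polynomial transformation, and it does not establish weak NP-hardness: ``weakly NP-hard'' still requires a genuinely polynomial-time (binary-encoding) reduction from an NP-hard problem; the adverb refers to the source problem's dependence on number magnitudes, not to a license for the reduction itself to run in pseudo-polynomial time. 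Since Partition is solvable in pseudo-polynomial time, the instances your reduction produces can be decided in time polynomial in their own size, so nothing about the hardness of rigid foldability follows. The paper avoids exactly this trap by packing each $a_i$ into the sector angle of a \emph{single} vertex, choosing $p_i = e^{-a_i/\Sigma(A)}$, i.e.\ $\tan\frac{\alpha_i}{2}=\tanh\bigl(\frac{a_i}{2\Sigma(A)}\bigr)$, so only $n+2$ vertices are needed; the price is that these angles are transcendental, and most of the paper's proof is devoted to the finite-precision bookkeeping (grid rounding $\epsilon_g$, tanh approximation $\epsilon_t$, and the choice of $\epsilon$ strictly between $\epsilon_{\mathrm{LB}}$ and $\epsilon_{\mathrm{UB}}$) that makes the rational-coordinate output instance both polynomial-size and correct. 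Your claim that $\epsilon=0$ suffices only works because you gave up binary-size efficiency.

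A secondary but still essential gap: your chain gadget needs every twist in a chain to be forced into the \emph{same} mode, and you only flag this as an unverified ``obstacle.'' If adjacent twists can choose modes independently, each unit contributes an independent factor $p^{\pm1}$ and the central closure condition becomes ``the total number of units can be signed to sum to zero,'' which is satisfiable whenever $\sum_i a_i$ is even, destroying the correspondence with Partition. Designing mode-forcing couplings under the all-creases restriction is nontrivial (the paper explicitly notes that the all-creases model does not admit the overlay gadgets used in the optional-creases reduction), so this step cannot be waved through. In the paper's construction no such coupling is needed, because each element is one vertex and the only forced modes are at the two mirror-symmetric vertices $\mathbf v_0,\mathbf v_{n+1}$, whose speed coefficients $0$ and $\infty$ in the wrong modes would prevent all creases from folding.
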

\begin{figure}[tbhp]
	\includegraphics[width=\linewidth]{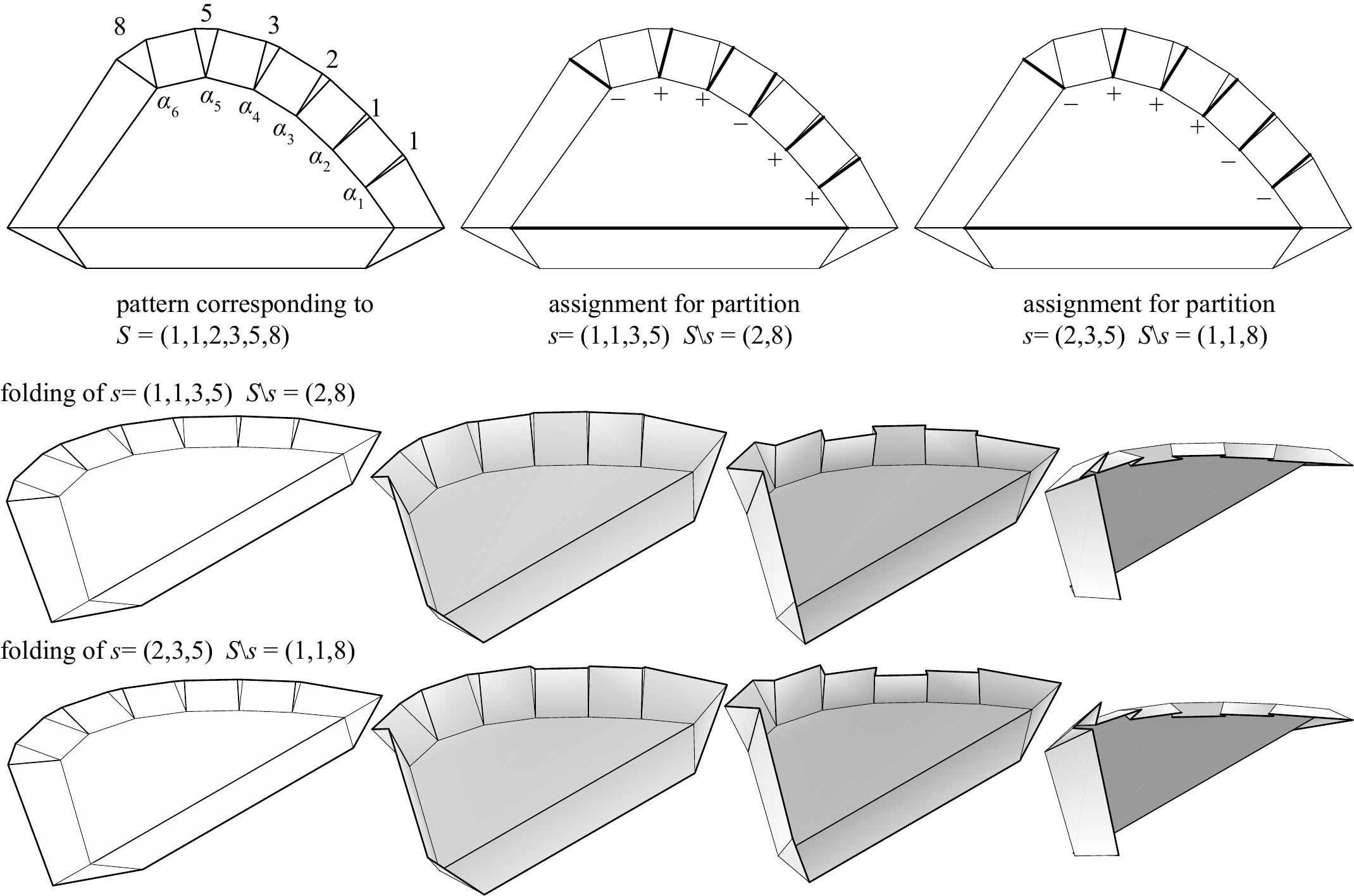}
	\centering
	\caption{Rigid origami realizing a given partitioning problem of $n>1$ by $n+2$-gonal closed chain of vertices.
	 The sector angles $\alpha_i$ of consecutive $n$ vertices are designed from the set of integers (two left figures),
	 and two vertices form a mirror symmetric degree-$4$ vertices.
	 Partitioning corresponds to MV assignments for each vertex (right two figures). 
	 The pattern is rigidly foldable when the MV pattern partitions the set into equal sums (bottom).}
	\label{fig:using-all-creases}
\end{figure}

\begin{proof}
To demonstrate the main idea, we first pretend that we can compute the needed real values exactly, and second we take care of finite precision.

Consider a partition problem of finding a subset $S$ of $A = \{a_1, a_2, \dots, a_n\}$ ($a_i \in \mathbb N$, $n> 4$) that partition the total values into halves, i.e., $\Sigma(S) = \sum(A \setminus S) = {1 \over 2} \Sigma(A)$, where $\Sigma(S) = \sum_{x\in S} x$.
We construct a crease pattern of an $(n+2)$-gonal closed chain of
flat-foldable, quadrivalent vertices $\mathbf v_0, \mathbf v_1, \dots, \mathbf v_{n+1}$;
refer to Figure~\ref{fig:using-all-creases}.
We start with $n$ consecutive flat-foldable vertices
$\mathbf v_1, \mathbf v_2, \dots, \mathbf v_n$
where $\mathbf v_i$ has
$\alpha = \alpha_i\in(0^\circ,90^\circ)$ and $\beta=90\degrees$,
so that $\sum_i\alpha_i < 360^\circ$.
We split the remaining angle $360^\circ - \sum_i\alpha_i$ into halves, and make the last two vertices $\mathbf v_0$ and $\mathbf v_{n+1}$ be mirror symmetric and flat-foldable with $\alpha = \beta= \alpha_r = (360^\circ - \sum_i\alpha_i)/2$.

Now we consider the speed coefficients in a rigid folding.
For $1 \leq i \leq n$, vertex $\mathbf v_i$ has a speed coefficient of $p_i = p(\alpha_i) = \frac{1-\tan\frac{\alpha_i}{2}}{1+\tan\frac{\alpha_i}{2}}$ (mode $+$) or $p_i^{-1}$ (mode $-$).
Vertex $\mathbf v_{n+1}$ has speed coefficient of $p_r=p(\alpha_r,\alpha_r)=\frac{1-\tan^2\frac{\alpha_r}{2}}{1+\tan^2\frac{\alpha_r}{2}}$ (mode $+$) and $\infty$ (mode $-$), and vertex $\mathbf v_0$ has speed coefficient of $0$ (mode $+$) and $p_r^{-1}$ (mode $-$).
In order that all creases of vertices $\mathbf v_{n+1}$ and $\mathbf v_0$ fold simultaneously, $\mathbf v_{n+1}$ must chose mode $+$ and $\mathbf v_0$ must chose mode $-$, because $p_r$ is nonzero and finite.

Now we tweak $\alpha_i$ properly so that choosing the vertices that folds in mode $-$ is equivalent to choosing the element of $S$ from $A$, and the closure constraint (Equation~\eqref{eq:degree4closure}) is equivalent to the partition problem.
Specifically, we set the variables $\alpha_i$
such that
\begin{align}
{a_i \over \Sigma(A)} &= \log p_i^{-1},
\end{align}
for all $i=1,\ldots, n$, and thus
\begin{align}\label{eq:angles}
\tan \frac{\alpha_i}{2} &=   -1 + \frac{2}{1+e^{-\frac{a_i}{\Sigma(A)}}} =  \tanh\left(\frac{a_i}{2 \Sigma(A)}\right).
\end{align}
Then, by Equation~\eqref{eq:degree4closure}, the formed pattern is rigidly foldable if and only if
there exists a subset $S\subset A$ such that
\begin{align}
\left(\prod_{i\in S} p_i^{-1} \right)\left(\prod_{i\in A\setminus S} p_i\right) p_r p_r^{-1}&= 1,
\intertext{which, by taking the logarithm on both sides and multiplying by $\Sigma(A)$, is equivalent to the given partition problem:
}
\Sigma(S) - \Sigma(A \setminus S) = 0.
\end{align}
Because we scaled by $1/\Sigma(A)$, we have
\begin{align}
\sum_i \alpha_i &= \sum_i 2\arctan \left(\tanh\left(\frac{a_i}{2 \Sigma(A)}\right)\right)
< 2\sum_i \frac{a_i}{2 \Sigma(A)} = 1 < \frac{\pi}{3} < \pi.
\end{align}
Thus, $\sum_{i\in \{1,2,\dots,n\}} \alpha_i \in (0^\circ, 180^\circ)$,
so $\alpha_r(\Sigma(A)) =(360^\circ - \sum_{i\in \{1,2,\dots,n\}} \alpha_i)/2 \in (90^\circ, 180^\circ)$.

Notice that this ensures the range $p_i \in (0.26, 1)$, useful later for avoiding close-to-singular vertices. 
This also ensures that each turn angle of the central polygon is within $(0, 60^\circ)\subset(0, 180^\circ)$ and thus the convexity of the central polygon.
Thus we can always construct the $(n+2)$-gon from the slope of each segment derived from these turn angles.
Also, the range $\alpha_r \in(120^\circ, 180^\circ)$ ensures that $p_r \in (-1,-0.26)$, in particular, non-zero and finite, proving the assumption.
 
Next we handle the precision issues.
The reduction algorithm first computes $\tilde t_i$,
an approximation to $t_i = \tan {\alpha_i \over 2}$ given
by Equation \eqref{eq:angles}, but with precision $\epsilon_t$.
We can compute $x = \frac{a_i}{2 \Sigma(A)}$ exactly, represented as a rational.
To then compute $\tanh(x)$, we can use that it is the solution to the
autonomous differential equation $y' = 1 - y^2$ where $y(0) = 0$
\cite{tanh}, which we can approximate using the Runge--Kutte method
or even Euler's method.

\begin{figure}[tbhp]
	\includegraphics[width=0.5\linewidth]{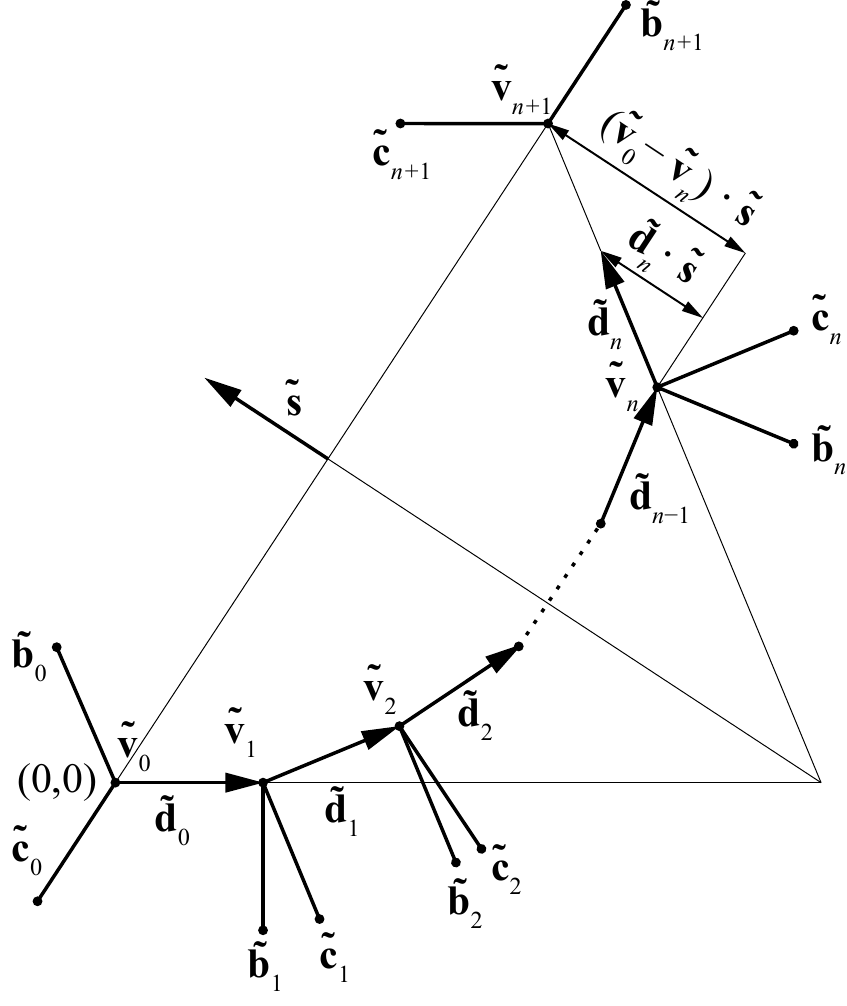}
	\centering
	\caption{The actual construction step of rigid origami realizing partition problem.}
	\label{fig:using-all-construction}
\end{figure}

Next the algorithm computes an approximation of the crease pattern,
one vertex at a time,
with computed vertex $\tilde{\mathbf v}_i$ approximating ideal vertex
$\mathbf v_i$.
Refer to Figure~\ref{fig:using-all-construction}.
Vertices $\tilde{\mathbf v}_0, \tilde{\mathbf v}_1, \dots, \tilde{\mathbf v}_n$
will lie on a grid with resolution $\epsilon_g = 1/G$
where $G$ is an integer to be chosen later.
First we compute \emph{difference vectors} $\tilde{\mathbf d}_i$
for $0 \leq i \leq n$, whose coordinates are also multiples of~$\epsilon_g$.
For $i < n$, $\tilde{\mathbf d}_i$ corresponds to
$\tilde{\mathbf v}_{i+1} - \tilde{\mathbf v}_i$
(but $i=n$ will behave slightly differently),
so we can compute $\tilde{\mathbf v}_i$ for $1 \leq i \leq n$ via
$\tilde{\mathbf v}_{i+1} = \tilde{\mathbf v}_i + \tilde{\mathbf d}_i$,
using $\tilde{\mathbf v}_0 = (0,0)$ as a base case.
We define $\tilde{\mathbf d}_0 = (1,0)$,
the unit rightward vector.
Given $\tilde{\mathbf d}_{i-1}$, we compute $\tilde{\mathbf d}_i$ by rotating
the vector $\tilde{\mathbf d}_{i-1}$ by the rotation matrix
\begin{equation}\label{eq:rotmat}
 \begin{pmatrix}
     1-\tilde t_i^2 \over 1+\tilde t_i^2 & -2\tilde t_i \over 1+\tilde t_i^2
     \\[2ex]
     2\tilde t_i \over 1+\tilde t_i^2    & 1-\tilde t_i^2 \over 1+\tilde t_i^2
   \end{pmatrix},
\end{equation}
and rounding the resulting coordinates to the nearest integer
multiples of~$\epsilon_g$.
Finally, we compute $\tilde{\mathbf v}_{n+1}$.
We want $\tilde{\mathbf v}_{n+1}$ and $\tilde{\mathbf v}_0$ to be
exactly mirror symmetric through the bisecting vector $\tilde{\mathbf s}$ of
$\tilde{\mathbf d}_n$ and $-\tilde{\mathbf d}_0$.
This bisecting vector is given by
$\tilde{\mathbf s} = {1 \over 2} (\tilde{\mathbf d}_n - \tilde{\mathbf d}_0)$.
We check that $\tilde{\mathbf v}_n + \tilde{\mathbf d}_n$
has a smaller dot product than $-\tilde{\mathbf v}_0$ does with
the bisecting vector, so that we can make
$\tilde{\mathbf v}_{n+1} - \tilde{\mathbf v}_n$ longer than $1$ unit.
If this condition does not hold, we restart the entire algorithm with the
order $a_1, a_2, \dots, a_n$ reversed.
Then we set $\tilde{\mathbf v}_{n+1}$ to
$$
  \tilde{\mathbf v}_n +
  {
    (\tilde{\mathbf v}_0 - \tilde{\mathbf v}_n) \cdot \tilde{\mathbf s}
    \over
    \tilde{\mathbf d}_n \cdot \tilde{\mathbf s}
  } ~ \tilde{\mathbf d}_n
$$
(\emph{not} rounded to the $\epsilon_g$-resolution grid).

Now we compute the boundary vertices
$\tilde{\mathbf b}_i, \tilde{\mathbf c}_i$
of the piece of paper, connected via creases to $\tilde{\mathbf v}_i$,
for each $i = 0, 1, \dots, n+1$.
For $i = 1, 2, \dots, n$, we use equations
$\tilde{\mathbf b}_i - \tilde{\mathbf v}_i =
 -\tilde{\mathbf d}_{i-1}^\perp$ and
$\tilde{\mathbf c}_i - \tilde{\mathbf v}_i =
 -\tilde{\mathbf d}_i^\perp$
(i.e., $90^\circ$ right rotation of vectors $\tilde{\mathbf d}_{i-1}$
and $\tilde{\mathbf d}_i$),
which even remains on the $\epsilon_g$-resolution grid.
For $i=0$ and $i=n+1$, we use an extension of the segment
$\tilde{\mathbf v}_0 \tilde{\mathbf v}_{n+1}$ to define one boundary vertex,
and use reflections to compute the other boundary vertex.
Precisely, we define $\tilde{\mathbf c}_0$ and $\tilde{\mathbf b}_{n+1}$ via
$$
\tilde{\mathbf c}_0 - \tilde{\mathbf v_0} =
\frac{\tilde{\mathbf v}_0 - \tilde{\mathbf v}_{n+1}}
     {\lceil \|\tilde{\mathbf v}_0 - \tilde{\mathbf v}_{n+1}\|\rceil},
\quad
\tilde{\mathbf b}_{n+1} - \tilde{\mathbf v}_{n+1} =
\frac{\tilde{\mathbf v}_{n+1} - \tilde{\mathbf v}_0}
     {\lceil \|\tilde{\mathbf v}_{n+1} - \tilde{\mathbf v}_0\|\rceil};
$$
and define $\tilde{\mathbf b}_0$ and $\tilde{\mathbf c}_{n+1}$ via
$\tilde{\mathbf b}_0 - \tilde{\mathbf v}_0$ equalling the reflection of
$\tilde{\mathbf d}_0$ through $\tilde{\mathbf v}_0 \tilde{\mathbf v}_{n+1}$,
and
$\tilde{\mathbf v}_{n+1} - \tilde{\mathbf c}_{n+1}$ equalling the reflection of
$\tilde{\mathbf d}_n$ through $\tilde{\mathbf v}_0 \tilde{\mathbf v}_{n+1}$.
These coordinates are rational, with numerators and denominators bounded by
a polynomial in $1/\epsilon_g$.

Now we prove a sequence of claims about the constructed crease pattern
in order to show it is rigidly foldable up to precision~$\epsilon$.

First we claim that the crease pattern is exactly locally flat foldable,
i.e., satisfies Kawasaki's Theorem.  This claim follows by construction:
for each vertex $\tilde{\mathbf v}_i$, we constructed
$\tilde{\mathbf b}_i$ and $\tilde{\mathbf c}_i$ exactly to
guarantee local flat foldability of $\tilde{\mathbf v}_i$.
For $i = 1, 2, \dots, n$, we used $90^\circ$ rotations to guarantee two
right angles; and for $i = 0$ and $i=n+1$, we used exact reflections.

Next we claim that vertices $\tilde{\mathbf v}_0$ and $\tilde{\mathbf v}_{n+1}$
are mirror symmetric with each other, and each have two collinear creases
($\alpha = \beta$), as needed by the partition reduction.
This claim follows from the construction: the exact computation of the
bisecting vector $\tilde{\mathbf s}$ and vertices
$\tilde{\mathbf c}_0$ and $\tilde{\mathbf b}_{n+1}$.

Next we claim that each vertex $\tilde{\mathbf v}_i$, for $1 \leq i \leq n$,
has $\beta=90^\circ$ (exactly) and $\alpha = \tilde \alpha_i$ satisfying
$\tan \frac{\tilde \alpha_i}{2} = \tilde t_i \pm O(\epsilon_g)
= t_i \pm O(\epsilon_g + \epsilon_t)$.
That $\beta=90^\circ$ follows from the construction of
$\tilde{\mathbf b}_i$ and $\tilde{\mathbf c}_i$.
To measure $\tilde \alpha_i$, we first analyze the lengths of $\tilde{\mathbf d}_i$:
$\|\tilde{\mathbf d}_0\| = 1$,
and $\tilde{\mathbf d}_i$ is an $\epsilon_g$-grid rounding of a rotation of
$\tilde{\mathbf d}_{i+1}$, so
$| \, \|\tilde{\mathbf d}_i\| - \|\tilde{\mathbf d}_{i+1}\| \, |
\leq \sqrt 2 \, \epsilon_g$,
and therefore
$\|\tilde{\mathbf d}_i\|$ is within $\pm n \sqrt 2 \epsilon_g$ of~$1$.
Assuming $\epsilon_g < 1/(2 n \sqrt 2)$,
$\|\tilde{\mathbf d}_i\| \geq \frac{1}{2}$.
By construction, the angle $\tilde \alpha_i$ at $\tilde{\mathbf v}_i$ is equal to
the angle between vectors $\tilde{\mathbf d}_i$ and $\tilde{\mathbf d}_{i-1}$.
Because these vectors have length $\geq \frac{1}{2}$, the angle change
caused
by rounding the rotated $\tilde{\mathbf d}_{i-1}$ is at most
$2 \sqrt 2 \epsilon_g$.
By Taylor Series,
$$
\left|\tan {\tilde \alpha_i \over 2} - \tilde t_i\right|
\leq \left(2 \sqrt 2 \epsilon_g\right) \sec^2 \arctan \tilde t_i
   + \left(2 \sqrt 2 \epsilon_g\right)^2 \tilde t_i \sec^2 \arctan \tilde t_i
   + \cdots.
$$
Because $\alpha_i \in (0, 90^\circ)$, we have $t_i \in (0,1)$ and
thus $\tilde t_i \in [0,1]$, and $\sec^2 \arctan \tilde t_i \in [1,2]$.
Therefore $\left|\tan {\tilde \alpha_i \over 2} - \tilde t_i\right| = O(\epsilon_g)$,
and $\left|\tan {\tilde \alpha_i \over 2} - t_i\right| = O(\epsilon_g + \epsilon_t)$,

Finally we claim that
$\prod \tilde p_i = \prod p_i \pm O(n \epsilon_g + n \epsilon_t)$, where
$\tilde p_i = {1 - \tan^2 {\tilde \alpha_i \over 2} \over
               1 + \tan^2 {\tilde \alpha_i \over 2}}$.
We have $\tilde p_i = p_i \pm O(\epsilon_g + \epsilon_t)$
by a similar argument to the above.
Because $p_i \in (0.26, 1)$, we can convert this absolute error bound into
a relative error bound: $\tilde p_i / p_i = 1 \pm O(\epsilon_g + \epsilon_t)$.
Therefore
\begin{equation} \label{tilde p_i}
\prod \tilde p_i = \prod p_i \pm O(n (\epsilon_g + \epsilon_t)).
\end{equation}

This error bound gives us a lower bound on the precision $\epsilon$
in our output instance of finite-precision degree-$4$ flat-foldable
rigid foldability.
Namely, if
$\epsilon > \epsilon_{\mathrm{LB}} := c n (\epsilon_g + \epsilon_t)$,
where $c$ is the constant in the $O$ notation in Equation~\eqref{tilde p_i},
then the finite-precision rigid foldability instance has a ``yes'' answer
whenever $\prod \tilde p_i = 1$, i.e., when the Partition instance has
a ``yes'' answer.

It remains to prove that, if the Partition instance has a ``no'' answer,
then the constructed crease pattern is a ``no'' instance to finite-precision
rigid foldability.
For this property to hold, we need an upper bound on~$\epsilon$.
Consider a candidate partition $(S, A \setminus S)$, and suppose that
$\Sigma(S) \neq \Sigma(A\setminus S)$.  Because $A \subset \mathbb N$,
$|\Sigma(S) - \Sigma(A\setminus S)| \geq 1$.
Thus, for Inequality~\ref{eq:degree4closure-epsilon} to (incorrectly) hold
on the constructed crease pattern, we would need $\epsilon$ to be at least
\begin{align*}
\epsilon'_{\mathrm{UB}}
&= \left| \prod_i \tilde p_i -1 \right| \\
&= \left| \prod_i p_i -1 \pm O(n(\epsilon_g + \epsilon_t)) \right| \\
&= \left| \frac{\exp(\Sigma(S)/\Sigma(A))}{\exp(\Sigma(A\setminus S)/\Sigma(A))} -1 \pm O(n(\epsilon_g + \epsilon_t)) \right| \\
&= \left| \exp\left(\frac{\Sigma(S)-\Sigma(A\setminus S)}{\Sigma(A)}\right)-1  \pm O(n(\epsilon_g + \epsilon_t)) \right| \\
&\geq \left|\exp \left( {1 \over \Sigma(A)} \right) - 1 \pm O(n(\epsilon_g + \epsilon_t))\right| \\
&\geq \left|\frac{1}{\Sigma(A)} \pm O(n(\epsilon_g + \epsilon_t))\right|,
\end{align*}
by Taylor expansion.
Assuming $c n(\epsilon_g + \epsilon_t) < \frac{1}{\Sigma(A)}$ where $c$
is the (same) constant in the $O$ notation, the absolute value operation is
unnecessary, and we obtain
$$\epsilon'_{\mathrm{UB}} \geq
  \frac{1}{\Sigma(A)} - c n(\epsilon_g + \epsilon_t)
  =: \epsilon_{\mathrm{UB}}.$$
By setting $\epsilon < \epsilon_{\mathrm{UB}}$ (and thus
$< \epsilon'_{\mathrm{UB}}$),
we guarantee that ``no'' answers are preserved.

To guarantee that $\epsilon_{\mathrm{LB}} < \epsilon_{\mathrm{UB}}$,
we need that
$c n (\epsilon_g + \epsilon_t) <
\frac{1}{\Sigma(A)} - c n(\epsilon_g + \epsilon_t)$, i.e.,
$2 c n (\epsilon_g + \epsilon_t) < \frac{1}{\Sigma(A)}$.
Thus it suffices to set
$$\epsilon_g = \epsilon_t = \frac{1}{5 c n \Sigma(A)},
  \quad \text{i.e.,} \quad G = 5 c n \Sigma(A),$$
which we can compute and represent exactly as a rational number by taking an
integer upper bound on~$c$.

Therefore, by setting
$\epsilon = {\epsilon_{\mathrm{LB}} + \epsilon_{\mathrm{UB}} \over 2}$
(also computable exactly as a rational number),
we obtain $\epsilon_{\mathrm{LB}} < \epsilon < \epsilon_{\mathrm{UB}}$
as needed for correctness of the reduction.
All (rational) coordinates in the reduction have numerator and denominator
bounded by a constant-degree polynomial in $1/\epsilon_g = 5 c n \Sigma(A)$,
and thus so does the output number~$\epsilon$.
\end{proof}

\section{Rigid Foldability with Optional Creases is Strongly NP-hard}
\begin{theorem}
\label{optional}
Rigid foldability of a given crease pattern (with optional creases) is strongly NP-hard.
\end{theorem}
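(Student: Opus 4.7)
The plan is to reduce from 1-in-3 SAT. Given a 3-CNF formula $\phi$ with variables $x_1, \dots, x_n$ and clauses $C_1, \dots, C_m$, I would build a flat-foldable crease pattern $G$ assembled from degree-4 vertices and the square twists of Lemma~\ref{twist-mode} such that some nonempty subset $G' \subset G$ is rigidly foldable using all its creases if and only if $\phi$ has a 1-in-3 satisfying assignment. The freedom to omit creases (the optional-crease mechanism) is what encodes the truth assignment: for each variable $x_i$, precisely one of two literal-wires $w_{x_i}, w_{\neg x_i}$ is to be kept in $G'$ (``activated'') while the other is dropped as optional.

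I would use three kinds of gadgets. A \emph{variable gadget} for $x_i$ is a square twist whose two modes (Lemma~\ref{twist-mode}) select which of two outgoing literal-wires is included in $G'$. A \emph{wire gadget} is a chain of degree-4 flat-foldable vertices (analogous to the chain used in Theorem~\ref{usingall}) with sector angles chosen from a fixed finite palette, e.g.\ $\alpha = \arctan(3/4)$ giving speed coefficient $p = 1/2$; each wire transports a canonical speed coefficient from a variable gadget to a clause gadget, and splitters of the same kind duplicate a signal so that a variable can reach multiple clauses. A \emph{clause gadget} for $(l_1 \vee l_2 \vee l_3)$ is a closed loop of degree-4 flat-foldable vertices receiving the three literal-wires; its sector angles are tuned so that, by Corollary~\ref{th:degree4closure}, the closure condition around the gadget's central face evaluates to $1$ precisely when exactly one of the three incoming wires is present in $G'$ (and the other two have been dropped as optional). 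All gadgets are composed so that the resulting pattern is flat-foldable as required by Definition~\ref{exact definition}.

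The main obstacle will be designing the clause gadget so that removing an optional crease (which merges two adjacent faces of $G$ and hence changes the closure conditions of $G'$) does not introduce spurious rigidly foldable configurations outside the three intended ``exactly-one-true'' states. I would address this by locking the interior of each gadget with mandatory creases while permitting only wire creases to become optional, and by choosing sector angles so that the product of speed coefficients around every face of $G'$ equals $1$ in exactly the 1-in-3 configurations and is bounded away from $1$ in all others (all-false, two-true, or all-true). Because 1-in-3 SAT is strongly NP-hard and the construction uses only a fixed finite set of angles and lengths, rational vertex coordinates of polynomially bounded bit-complexity suffice, so unlike the Partition reduction of Theorem~\ref{usingall} no finite-precision analysis is required. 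Combining the $O(n+m)$ gadgets through their wire interfaces then yields a crease pattern whose rigidly foldable subsets correspond bijectively to 1-in-3 satisfying assignments of $\phi$, establishing strong NP-hardness.
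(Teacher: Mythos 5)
Your high-level architecture (reduce from 1-in-3 SAT; wires built from degree-4 flat-foldable vertices with a fixed angle palette such as $p=\tfrac12$; splitters for fan-out; a clause gadget enforcing exactly-one-true; no precision analysis needed, hence strong hardness) matches the paper's, but the enforcement mechanism you propose has a genuine gap. In the optional-crease problem of Definition~\ref{exact definition}, the subset $G'$ is an \emph{arbitrary} nonempty subset of creases chosen by the solver; there is no way to declare interior creases ``mandatory,'' so your plan to ``lock the interior of each gadget with mandatory creases while permitting only wire creases to become optional'' is not available in the model. You must instead prove that \emph{every} nonempty rigidly foldable subset encodes a satisfying assignment. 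The paper does this by encoding truth as folded versus unfolded (not as present versus absent in $G'$) and by constructing gadgets whose local kinematics force propagation or suppression of folding: a suppressor (perpendicular wire crossing, whose degree-4 vertices have speed coefficient $0$) forbids two crossing wires from folding simultaneously; a crossover (a splitter overlaid on a suppressor) lets wires cross without logical interference; and the clause gadget exploits narrow versus wide vertical wire widths so that the output wire can fold only by activating exactly one of three twists. Your clause gadget is only asserted to exist by ``tuning sector angles so the closure condition evaluates to $1$ precisely when exactly one wire is present,'' but Corollary~\ref{th:degree4closure} applies to a face all of whose bounding creases actually fold, and dropping wires merges faces and changes which creases bound them --- exactly the situation you would need to control, and for which no construction or argument is given. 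Likewise nothing in your construction prevents the solver from keeping both literal wires of a variable, neither of them, or a small foldable fragment inside a single wire or gadget.

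A second concrete problem is the variable gadget: both modes of the square twist in Lemma~\ref{twist-mode} fold \emph{all} of the twist's creases at nonzero speed, so a mode choice cannot ``select which of two outgoing literal-wires is included in $G'$'' --- membership in $G'$ is the solver's choice, not a kinematic consequence, and mutual exclusion of the two literal wires would itself require a suppressor-style gadget of the kind you have not built. Finally, your ``if'' direction is also incomplete: in a planar layout the wires necessarily cross, and exhibiting a rigid folding for a satisfying assignment requires all closure conditions around \emph{all} cycles of the layout to hold simultaneously; the paper achieves this only through explicit speed bookkeeping (gutter versus staircase wires, widening splitters, and $90^\circ$-rotated crossovers synchronizing speeds $1$, $p$, $p^2$), which ``combining the gadgets through their wire interfaces'' does not by itself supply.
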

\begin{figure}[tbhp]
	\includegraphics[width=0.8\linewidth]{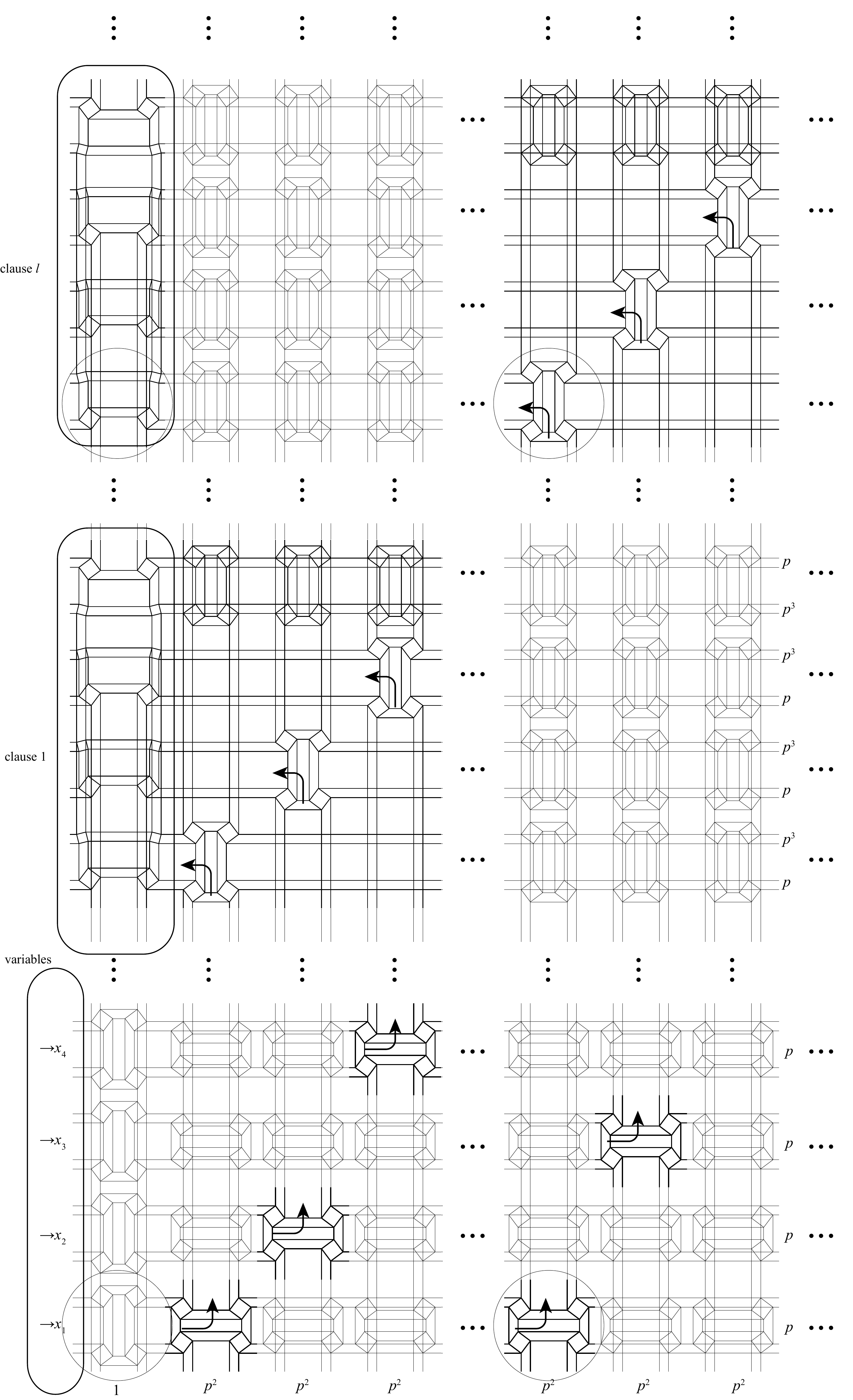}
	\centering
	\caption{A crease pattern rigidly foldable if and only if it satisfies a 1-in-3 SAT.}
	\label{fig:reduction}
\end{figure}

\begin{figure}[tbhp]
	\includegraphics[width=0.8\linewidth]{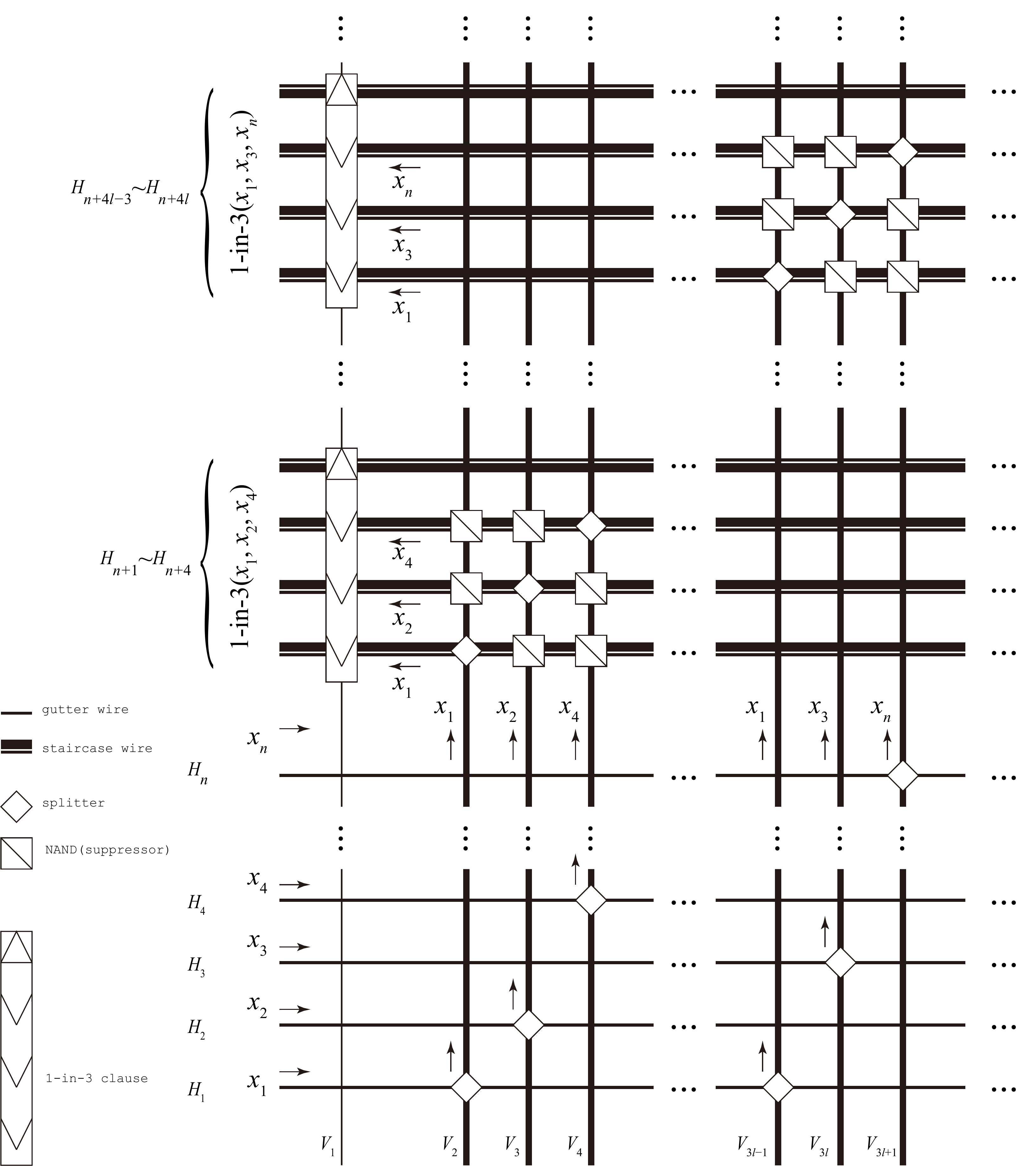}
	\centering
	\caption{Schematic diagram of gadget layout.}
	\label{fig:overview}
\end{figure}
To prove this, we reduce from 1-in-3 SAT. 
Figure~\ref{fig:reduction} shows the crease pattern for such a 1-in-3 SAT reduction, i.e., the crease pattern is rigidly foldable if and only if there is a set of binary variables that satisfy a given 1-in-3 SAT problem.
The basic idea is that we describe binary variables by whether creases are folded (true) or not (false).
This binary information will be carried via sets of four parallel creases, which we call \emph{wires}.
As we detail later, we carefully design \emph{gadgets}, i.e., crease patterns connected to wires that are designed to be rigidly foldable for a given binary pattern of wires. 

To overview the functional connection between gadgets, Figure~\ref{fig:overview} shows the schematic for designing such a crease pattern.
The gadgets are designed to have two inputs of horizontal (from the left) and vertical (from below) wires, or multiple inputs from different horizontal wires from the right (in the case of the 1-in-3 clause gadgets), so that the wires carry their signals through the gadgets. 
Specifically, we design a \emph{splitter gadget} (Figure~\ref{fig:splitter}) that copies a horizontal value to a vertical value and vice versa, a \emph{suppressor gadget} (Figure~\ref{fig:cross-over}) that enables a NAND operation between horizontal and vertical values, and a \emph{1-in-3 clause gadget} (Figure~\ref{fig:clause}) that rigidly folds if and only if exactly one of three horizontal inputs are true.
In addition, there are many intersections of horizontal and vertical wires. We carefully design our crease pattern so that these intersections merely cross over and do not logically interfere with one another.
The gadgets are arranged in the crease pattern as follows (more detail is given in Section~\ref{sec:proof}): 
\begin{enumerate}
\item The bottom $n$ horizontal wires represent the input variables $x_1, \dots x_n$.
        They are copied to vertical lines by using splitters and carried to each clause that uses this variable.
\item Each set of four horizontal wires above the input variables section correspond to a clause. 
        The copied variables in vertical lines are reflected horizontally using a $3\times3$ grid of gadgets composed of splitters and suppressors.
        This forces that at most $1$ of $3$ variables to be true, and the folding of 1-in-3 clause gadget (thus making the clause be true) will ensure that exactly one of the horizontal wires must be true.  
\end{enumerate}

To prove that this works, we follow the following strategy.
For each gadget, (1) we show that it successfully constrains the binary operations, i.e., it keeps wrong patterns from folding (the only if part), and (2) we give one MV assignment and rigid folding mode for each possible case of true and false (the if part).
The first part will make sure that only if $3$-SAT is solved, the pattern is rigidly foldable.
The second part shows that there exists at least one possible assignment that makes the whole thing rigidly foldable.

\subsection{The Gadgets}
\subsubsection{Wires}
We extract a set of four parallel lines from the tessellation structure, and call them \emph{true} if any of the four creases are folded, and \emph{false} if none of the four creases are folding.

For true values we will use $2$ rigid folding modes of wires, which we call \emph{gutter} and \emph{staircase}, respectively (Figure~\ref{fig:wire}).
Note that the folding speeds measured in tangents of half the fold angles are different between two pairs of creases for a staircase wire.
In our construction we will use a folding speed of $p=\frac{1}{2}$, and the relative speed of the pairs of creases in a staircase will be $p^2=\frac 1 4$.

\begin{figure}[t]
	\includegraphics[width=0.5\linewidth]{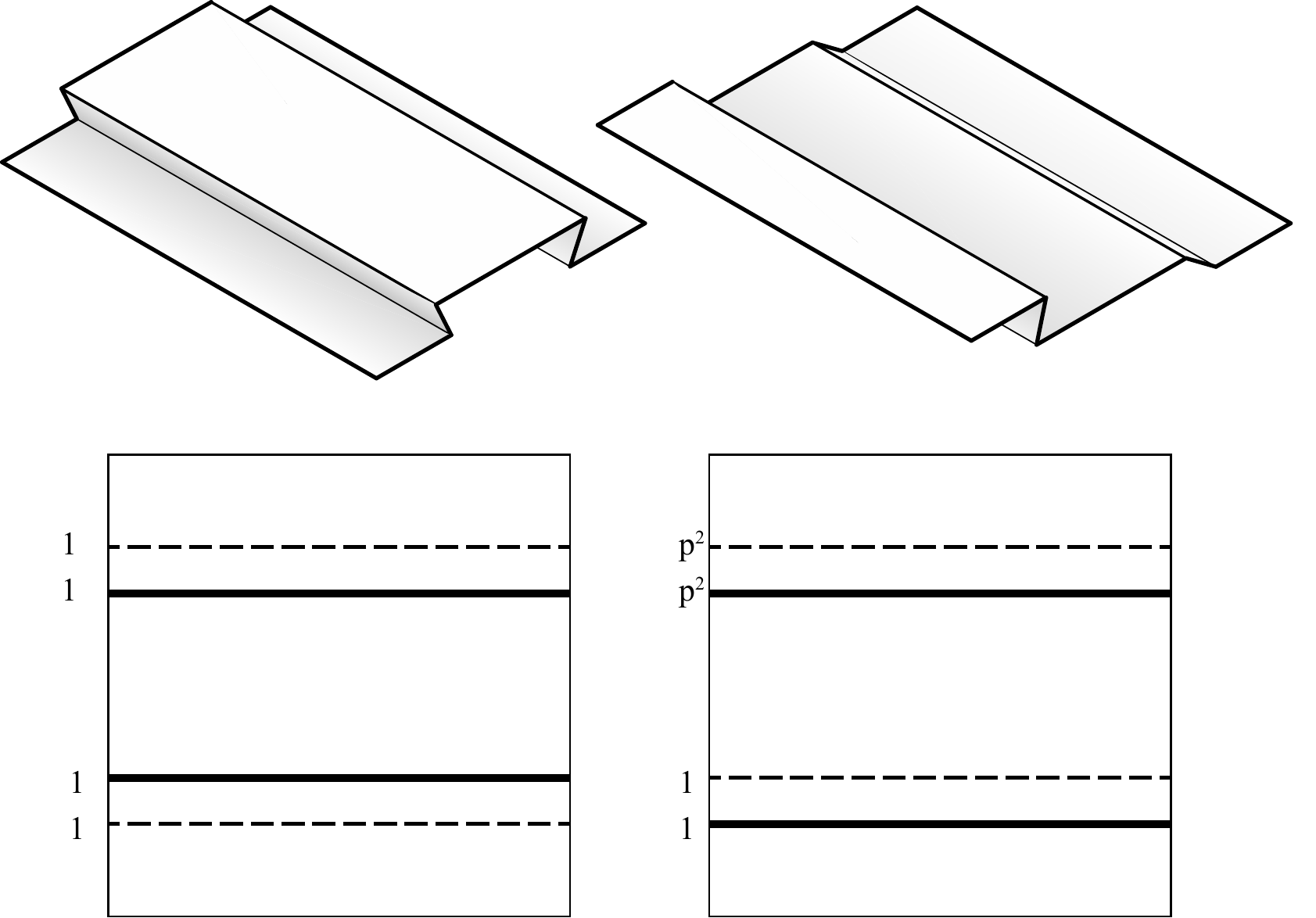}
	\centering
	\caption{Two types of wire gadgets in true setting. Left: gutter wire. Right: staircase wire. Note that the numbers show relative folding angle speed.}
	\label{fig:wire}
\end{figure}

\subsubsection{Splitters}
\begin{figure}[tbhp]
	\includegraphics[width=\linewidth]{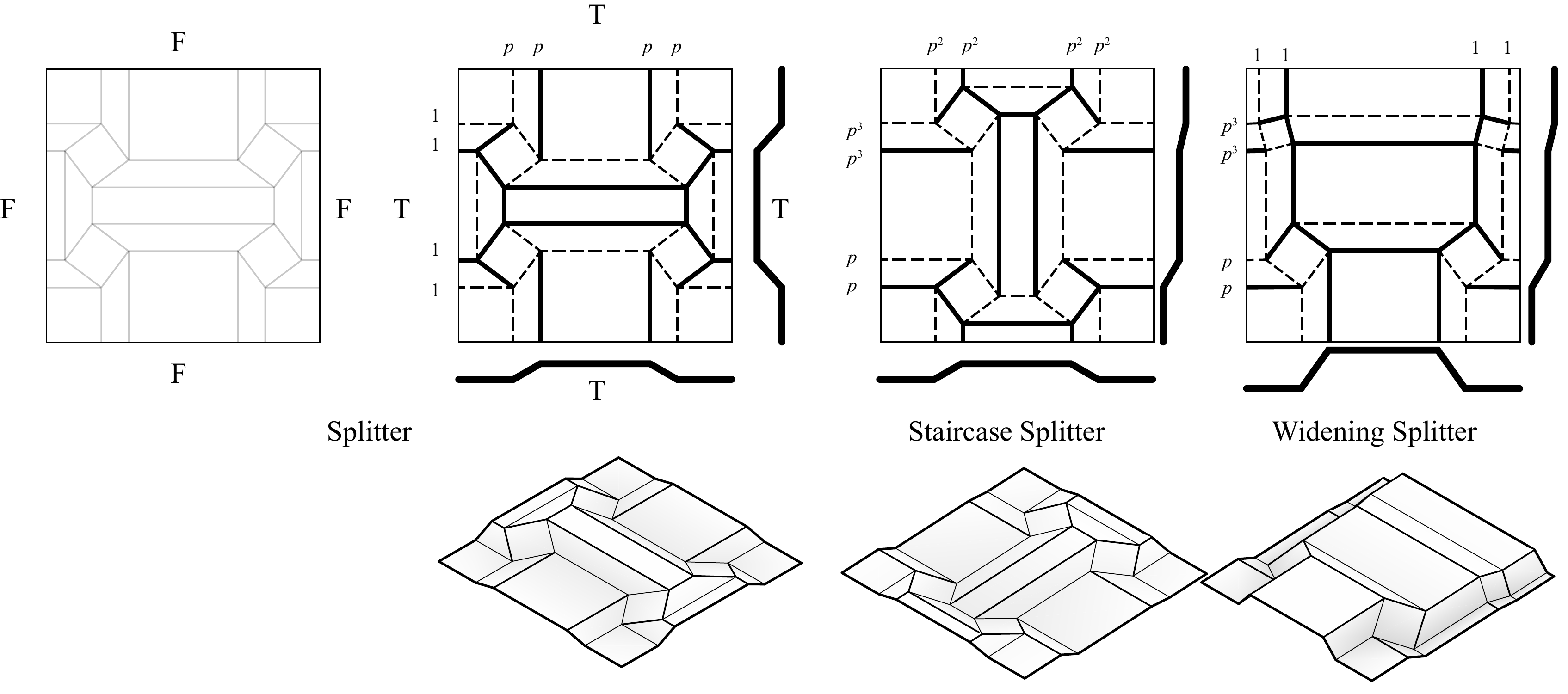}
	\centering
	\caption{Splitter gadget allowing for (true, true) or (false, false) only.}
	\label{fig:splitter}
\end{figure}

A \emph{splitter} is the $2\times 2$ twist fold tessellation with a twist angle of $\arctan{3\over 4}$, where the left and right horizontal wires are the input and the top and bottom vertical wires are the output (or the other way around) (Figure~\ref{fig:splitter} left two).  
Note that we may think of both the left and right wires to be inputs (and both top and bottom to be outputs) because signals will always be unchanged as they travel through the splitter, either horizontally or vertically.
A standard splitter will have all of its wires be gutters.
A \emph{staircase splitter} is a $90^\circ$ rotation of splitter, but its wires will have different modes so that the horizontal wire is a staircase and the vertical wire is a gutter.
A \emph{widening splitter} uses a pattern with two pairs of twists with angles $\arctan{3\over 4}$ (at the bottom) and $2\arctan {7\over9}$ (at the top) to change the widths of vertical wires.
The folding mode of a widening splitter is designed to match with staircase horizontal wires and gutter vertical wires.

In what follows we denote the binary values of our gadgets with the notation $(X, Y)$ where $X$ is the boolean value of the horizontal wires and $Y$ is the value of the vertical wires.

\begin{lemma}
\label{splitter-lemma}
A splitter, staircase splitter and widening splitter can fold only if they copy the values in the input to the output, and thus the possible patterns are (true, true) or (false, false).
\end{lemma}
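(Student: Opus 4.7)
The plan is to leverage an \emph{all-or-nothing} dichotomy that holds at each degree-4 flat-foldable vertex. By Theorem~\ref{thm:tan}, the configuration space at such a vertex is parameterized by a single variable $t = \tan(\rho/2)$, with the other three fold angles equal to $\pm t$ or $\pm p\, t$ for a constant $p$ with $0 < |p| < 1$ (this strict nonvanishing is guaranteed here because each splitter uses twist angles, e.g.\ $\arctan \frac{3}{4}$, that are strictly smaller than the complementary sector angles at the vertex). Consequently, at any vertex of a splitter, either all four incident creases are unfolded ($t=0$) or all four simultaneously carry nonzero fold angle.

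For the standard splitter, the central square twist has four corner vertices, each incident to one horizontal-wire crease, one vertical-wire crease, and two creases of the central twist square. If the horizontal input is false, then at each corner vertex meeting a horizontal crease, at least one of its four creases has fold angle zero, hence by the dichotomy all four creases at those vertices are unfolded, including the two twist-square creases. These twist creases are shared with the remaining corner vertices, which must then also have all creases unfolded, forcing the vertical output to be false. A symmetric argument rules out $(\text{false},\text{true})$ and $(\text{true},\text{false})$. Combined with Lemma~\ref{twist-mode}, which enumerates the admissible square-twist MV assignments, the remaining configurations $(\text{true},\text{true})$ and $(\text{false},\text{false})$ are indeed rigidly foldable, which I would exhibit by writing down one MV assignment (gutter or staircase mode of the wires, plus a $++--$ or $+-+-$ twist mode) consistent with the speed coefficients. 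The staircase splitter is a $90^\circ$ rotation of the standard splitter with exactly the same vertex topology, so the same propagation argument transfers, although the wire modes involved are different (staircase horizontal and gutter vertical) and one must check that the relative speeds $1, p, p^2$ are compatible at the twist — a straightforward application of Corollary~\ref{th:degree4closure}.

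The widening splitter requires slightly more care because it contains two rows of twist vertices with different angles $\arctan \frac{3}{4}$ and $2\arctan\frac{7}{9}$, as well as a wider row of vertical wires. Nevertheless, every internal vertex remains degree-4 and flat-foldable, so the dichotomy still applies at every vertex, and the twist creases between the bottom row of twists and the top row of twists form a connected subgraph linking horizontal-side vertices to vertical-side vertices. The main obstacle I anticipate is the bookkeeping for this gadget: I would verify that the two twist angles in the widening splitter are each strictly less than the other sector angles of their vertices (so no speed coefficient vanishes and the dichotomy does not degenerate), then trace a path of shared creases connecting the horizontal wire vertices to the vertical wire vertices, and finally use Corollary~\ref{th:degree4closure} to check that the resulting speed coefficients multiply to $1$ only in the $(\text{true},\text{true})$ mode described in Figure~\ref{fig:splitter}. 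Once this connectivity and consistency are in place, the same propagation argument rules out mixed patterns, and exhibiting one MV assignment for each of $(\text{true},\text{true})$ and $(\text{false},\text{false})$ completes the proof.
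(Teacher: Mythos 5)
Your proposal follows essentially the same route as the paper: the paper's proof is exactly the observation that every horizontal crease is linked to every vertical crease through degree-4 flat-foldable vertices whose speed coefficients are nonzero and finite, so folding any crease forces all creases in the gadget to fold, ruling out mixed patterns. Your extra verification of the $(\text{true},\text{true})$ and $(\text{false},\text{false})$ modes belongs to the separate Lemma~\ref{splitter-mode-lemma} rather than this one, and note that the splitter is a $2\times 2$ tessellation of square twists rather than a single twist, but neither point affects the propagation argument.
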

\begin{proof}
All horizontal creases are connected to vertical creases via degree-$4$ vertices with finite value of speed coefficients.
If any of these creases fold, then every fold line in the gadget must fold, forcing the input and the output wires to have the same value.
\end{proof}

\begin{lemma}
\label{splitter-mode-lemma}
A splitter, staircase splitter and widening splitter can rigidly fold in the mode specified in Figure~\ref{fig:splitter}, where
the numbers $1, p, p^2, p^3$ ($p=\frac{1}{2}$) assigned to the fold lines are the absolute relative folding speeds measured in tangent of half the fold angles.
\end{lemma}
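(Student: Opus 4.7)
The plan is to exhibit the MV assignment and folding mode for each of the three splitter variants and then invoke the Closure Condition to certify rigid foldability. Since each gadget is a flat-foldable quadrivalent mesh by construction, Corollary~\ref{th:degree4closure} reduces the task to (i)~choosing a valid MV assignment at each degree-$4$ vertex (via Lemma~\ref{lemma-assign}), and (ii)~checking that around every interior face the product of speed coefficients equals~$1$. The speed coefficients used throughout come from Theorem~\ref{thm:tan} with $\beta = 90\degrees$, namely $p(\alpha) = \frac{1-\tan(\alpha/2)}{1+\tan(\alpha/2)}$; for $\alpha = \arctan\frac{3}{4}$ this evaluates to $p = \frac{1}{2}$, and for $\alpha = 2\arctan\frac{7}{9}$ it evaluates to $p' = (1 - 7/9)/(1 + 7/9) = \frac{1}{8} = p^3$.

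First I would handle the standard splitter. It is a square twist with uniform twist angle $\arctan\frac{3}{4}$, so Lemma~\ref{twist-mode} already guarantees exactly two rigidly foldable modes. I would select the mode that produces gutter wires on all four sides, propagate the speed labels edge-by-edge using the speed coefficients from Theorem~\ref{thm:tan} at each vertex, and verify that the resulting labels on the eight boundary creases match the $1, p, p^2, p^3$ pattern shown in Figure~\ref{fig:splitter}. The staircase splitter uses the same underlying tessellation but the alternative valid square-twist mode (equivalently, a $90\degrees$ rotation of modes), chosen so that adjacent horizontal-wire creases have speed ratio $p^2 = \frac{1}{4}$ (the staircase condition) while vertical-wire creases fold at a common speed. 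Again the closure condition holds automatically because it is a valid square-twist mode by Lemma~\ref{twist-mode}.

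The widening splitter is where the main work lies, since it uses two different twist angles and is no longer a square twist covered by Lemma~\ref{twist-mode}. I would write out the four degree-$4$ vertices explicitly, assign MV so that the horizontal wires are staircases and the vertical wires are gutters of differing widths, and then directly check the Closure Condition~\eqref{eq:degree4closure} on each interior face. The critical point is the identity $p' = p^3$ established above: this is precisely what enables the top pair of vertices (twist angle $2\arctan\frac{7}{9}$) to transport a speed label of $p^3$ into a label of $1$ across the gadget, compensating for the factor contributed by the bottom pair. I would choose the signs of the modes at the four vertices so that this arithmetic closes on both interior faces.

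In all three cases the speed coefficients are strictly between $0$ and $1$ in absolute value, hence finite and nonzero, so every assigned crease folds simultaneously with the prescribed nonzero speed ratios. By Lemma~\ref{lemma-assign}, each specified MV assignment then determines a unique $1$-parameter rigid folding path through the flat state, establishing the claim. The principal obstacle is purely the arithmetic verification of the closure condition on the widening splitter, which reduces to confirming $p' = p^3$ with the particular choice $\alpha_2 = 2\arctan\frac{7}{9}$ that the construction was designed to satisfy.
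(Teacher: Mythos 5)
Your overall strategy---exhibit an explicit MV assignment/mode for each gadget and certify it by checking speed-coefficient consistency, using the two key numerical facts $p(\arctan\tfrac34)=\tfrac12$ and $p(2\arctan\tfrac79)=\tfrac18=p^3$---is essentially the paper's, which likewise exhibits the modes of Figure~\ref{fig:splitter} and verifies the stated speeds. The genuine problem is what you take the gadgets to be. A splitter is not a single square twist: it is the $2\times 2$ twist-fold tessellation, i.e.\ an assembly of \emph{four} square twists (sixteen degree-$4$ vertices), and the widening splitter likewise consists of two bottom twists of angle $\arctan\tfrac34$ and two top twists of angle $2\arctan\tfrac79$, not ``four degree-$4$ vertices.'' Consequently Lemma~\ref{twist-mode} does not by itself certify the standard splitter; it certifies each constituent twist, and one must still show that the four twists' chosen modes are mutually compatible along the shared wires. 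The paper does exactly this: it takes mode~(a) of Lemma~\ref{twist-mode} and mirrors it consecutively across the horizontal and vertical axes, and for the widening splitter it joins bottom and top twists through folds of equal absolute speed and mirrors across the vertical line, the identity $p(2\arctan\tfrac79)=p^3$ then giving the $p^2$ speed ratio between the wide and narrow vertical wires. Your accounting (``eight boundary creases,'' closure on ``the four degree-$4$ vertices'') is therefore carried out on the wrong cell complex, and the inter-twist compatibility that the paper obtains from the mirror construction is never addressed as written. Similarly, the paper builds the staircase splitter by rotating the splitter and then flipping the modes of some \emph{individual} vertices (Figure~\ref{fig:single-vertex}) to cascade the factor $p$, not by selecting the other square-twist mode of Lemma~\ref{twist-mode}.

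That said, the gap is repairable rather than conceptual: your fallback of verifying Corollary~\ref{th:degree4closure} face by face is precisely the right certificate, and applied to the true sixteen-vertex patterns it would go through, since the decisive arithmetic ($p=\tfrac12$, $p'=p^3$, staircase ratio $p^2$) is the same as in the paper's argument.
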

\begin{proof}
All splitters are composed of $4$ rigid origami twists as seen in Lemma~\ref{twist-mode}.
By consecutively mirroring the square twist in mode (a) of Figure~\ref{fig:twist} across horizontal and vertical axes, we obtain a splitter with the folding mode described in Figure~\ref{fig:splitter}.
Because $p(\arctan{3\over 4}) = \frac{1}{2}$, the folding follows the specified speed.

The staircase splitter is obtained by rotating the normal splitter, but then changing some of the modes of the vertices (refer to Figure~\ref{fig:single-vertex}) so as to cascade the relative folding speed $p$ as shown in Figure~\ref{fig:splitter}.
A widening splitter connects mode (a) of the square twist from Figure~\ref{fig:twist} with another square twist with a different twist angle.
The consistent folding mode is ensured by having  the bottom and top twists connected through folds with the same absolute speed and then having them mirrored with respect to the vertical line.
Now, because we chose $p(2\arctan {7\over9}) = \frac{1}{8}$ for the top twists,  the top two creases folds $p^2$ times the speed of the bottom two creases.
\end{proof}
\begin{remark}\label{rm:splitter-speed}
Note that vertical wires are all gutters.
Also, for the normal splitter, the speed of vertical wire is $p$ times the speed of horizontal wire.
Moreover, the speed of the vertical wire in the staircase splitter is $p^2$ times that of widening splitter (when we synchronize the horizontal staircase wires of staircase and widening splitters).
\end{remark}

\subsubsection{Suppressor and Crossover Gadgets}
Just overlaying two orthogonal wires (Figure~\ref{fig:cross-over} left) would make vertical and horizontal wires suppress each other so that they cannot be folded at the same time.
We call this a \emph{suppressor}, which only allows the patterns (false, false), (true, false), and (false, true).
\begin{lemma}
\label{suppressor-lemma}
A suppressor leaves all four horizontal creases unfolded or all four vertical creases unfolded.
\end{lemma}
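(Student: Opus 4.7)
The plan is to reduce the lemma to applying Theorem~\ref{thm:tan} to each of the $16$ intersection vertices of the suppressor, all of which are degree-$4$ flat-foldable vertices with sector angles $(90\degrees,90\degrees,90\degrees,90\degrees)$. Specializing Equations~\eqref{eq:p-a} and~\eqref{eq:p-b} at $\alpha=\beta=90\degrees$ gives $p_a(90\degrees,90\degrees)=0$ and $p_b(90\degrees,90\degrees)=0$, so Equation~\eqref{eq:folding-modes} collapses to
$$
(t_0,t_1,t_2,t_3) = (t,0,t,0) \quad \text{(Case a)}
\quad\text{or}\quad
(t_0,t_1,t_2,t_3) = (0,t,0,t) \quad \text{(Case b)}.
$$
Thus at each intersection vertex $w$, either the two collinear horizontal creases through $w$ both fold and both vertical creases through $w$ stay flat (Case a), or vice versa (Case b).

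Next I would propagate this vertex-local dichotomy to a wire-level conclusion. Recall that each crease in a rigid origami folded state carries a single global fold angle $\rho_i$; ``folded'' or ``unfolded'' is therefore a property of the whole crease, not of a point. Suppose, for contradiction, that some horizontal crease $h$ of the suppressor and some vertical crease $v$ of the suppressor both have nonzero fold angle. They meet at an intersection vertex $w$, and at $w$ the dichotomy above forces either $\rho_h=0$ (if $w$ is in Case b) or $\rho_v=0$ (if $w$ is in Case a), a contradiction.

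Hence no horizontal crease and vertical crease can simultaneously be folded. Equivalently, either all four horizontal creases have $\rho=0$, or all four vertical creases have $\rho=0$, which is exactly the lemma statement.

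The only subtle point is handling the degenerate specialization of Theorem~\ref{thm:tan}: both branches (a) and (b) become ``fold only one of the two opposite pairs,'' so the vertex admits the two rigid modes shown in Figure~\ref{fig:overlay}, plus the trivial unfolded state. Once that specialization is explicitly recorded, the rest is a one-line propagation using the fact that each crease has a well-defined single fold angle.
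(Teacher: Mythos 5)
Your proof is correct and takes essentially the same route as the paper: the paper's argument likewise observes that each intersection is a degree-$4$ vertex with all sector angles $90\degrees$, whose speed coefficients $p_a=p_b=0$ force the crossing vertical creases to remain flat whenever a horizontal crease is folded (and vice versa). Your explicit specialization of Theorem~\ref{thm:tan} merely spells out this vertex-level dichotomy before propagating it over the $4\times 4$ grid of intersections, which is the same idea stated slightly more fully.
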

\begin{proof}
Any horizontal crease crosses perpendicularly to all vertical creases, forming degree-$4$ vertex with $90^\circ$ sector angles.
Because the speed coefficients are $0$ at such a point, vertical creases cannot fold if horizontal crease have fold angle $(0, 180^\circ)$.
\end{proof}
This implies that we need to devise a crossover system to avoid logical interactions between horizontal and vertical wires.

A \emph{crossover gadget} (Figure~\ref{fig:cross-over}, right) is realized by overlaying the crease pattern of a splitter on top of a suppressor, where we line up the horizontal and vertical inputs.
The overlaying of patterns allows us to chose either pattern to be active, so we get the union of the folding modes of splitters and suppressor, which complete four possible binary patterns.
\begin{lemma}
\label{crossover-lemma}
Crossover gadgets can fold into four patterns (false, false), (true, false), (false, true) and (true, true).
The folding modes of (true, false) and (false, true) patterns follow those of the corresponding wires, and the folding mode of (true, true) pattern follows that of a splitter.
\end{lemma}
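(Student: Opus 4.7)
The plan is to exploit the optional-crease formulation of the problem: for each of the four target boolean patterns, I will exhibit a subset $G'$ of the crossover gadget's creases whose restricted crease pattern admits a rigid folding realizing that pattern. By construction the crossover gadget is the union of a suppressor (a $4\times 4$ grid of wire crossings) with a splitter whose four boundary rows and four boundary columns coincide with four of the suppressor's horizontal and vertical wires. Hence any rigid folding of either the underlying splitter alone or the underlying suppressor alone restricts to a rigid folding of the crossover gadget, with the unselected creases declared unfolded (i.e., not in $G'$).

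First I would dispose of the three easy cases. For $(\text{false},\text{false})$, take $G'$ to contain none of the gadget's creases, yielding the trivial folding. For $(\text{true},\text{false})$, take $G'$ to consist of just the four horizontal wire creases that extend through the gadget, omitting all splitter twist creases and all vertical wires; the selected creases form four parallel lines, which rigidly fold as either a gutter or staircase wire. The case $(\text{false},\text{true})$ is symmetric. In each of these, no degree-$4$ vertex is activated, so Corollary~\ref{th:degree4closure} is trivially satisfied, and the realized folding mode matches the claim that it follows the wire mode.

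For $(\text{true},\text{true})$, I would take $G'$ to be exactly the splitter's crease set within the gadget, omitting the remaining suppressor wires that do not coincide with splitter boundary wires. Applying Lemma~\ref{splitter-mode-lemma} directly gives a rigid folding of this subset in a mode whose horizontal and vertical boundary wires are both folded, which is precisely the desired $(\text{true},\text{true})$ configuration. The closure conditions at every interior vertex reduce to those already checked for the splitter in Lemma~\ref{splitter-mode-lemma}.

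The main obstacle I foresee is confirming compatibility at the interface where the crossover gadget meets the adjacent wires of the surrounding pattern. Specifically, in the $(\text{true},\text{true})$ case the incoming horizontal and vertical wires must interface with the splitter's boundary wires with matching folding speeds; this is immediate because the overlay construction ensures these wires coincide, and Remark~\ref{rm:splitter-speed} records the required speed relationship. In the single-direction cases, the selected wire creases of $G'$ extend continuously across the gadget, so their fold angles propagate unchanged. Together these four constructions establish all four patterns with the claimed folding modes.
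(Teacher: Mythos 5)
Your proposal is correct and matches the paper's reasoning: the paper justifies this lemma (without a separate proof environment) by observing that the crossover is the overlay of a splitter on a suppressor, so the foldable patterns are the union of the two gadgets' modes — exactly your case analysis of choosing $G'$ to be nothing, one set of wire creases, or the splitter's creases. Your version just spells out the same argument slightly more explicitly, including the trivial closure checks and the appeal to Lemma~\ref{splitter-mode-lemma} for the (true, true) case.
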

\begin{figure}[tbhp]
	\includegraphics[width=\linewidth]{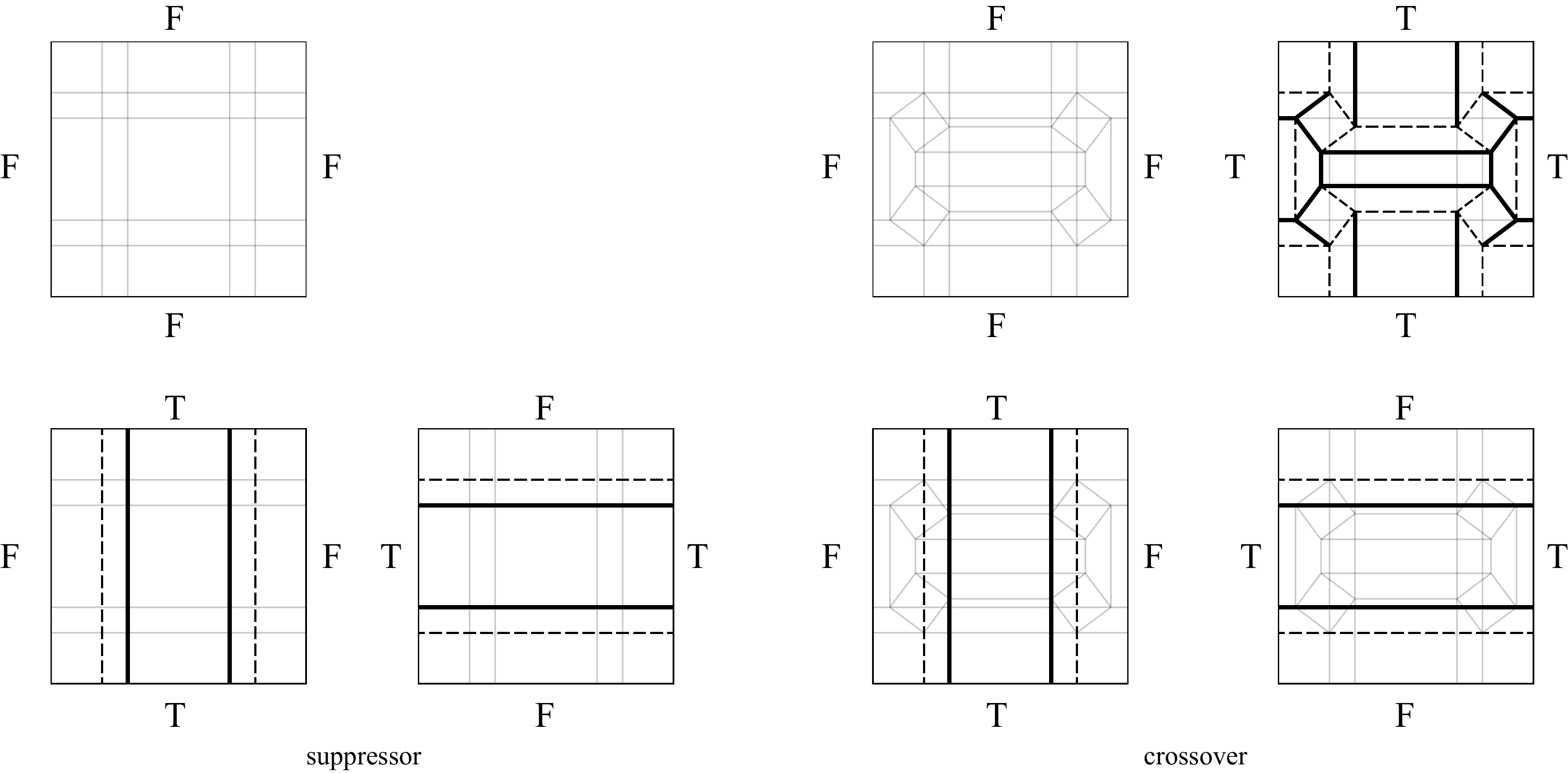}
	\centering
	\caption{Suppressor (left) can not achieve (true, true) pattern. Crossover gadget (right) can realize any combination (false, false) (false, true), (true, false), (true, true)}
	\label{fig:cross-over}
\end{figure}

\subsubsection{Clause Gadget}
\begin{figure}[t]
	\includegraphics[width=1.0\linewidth]{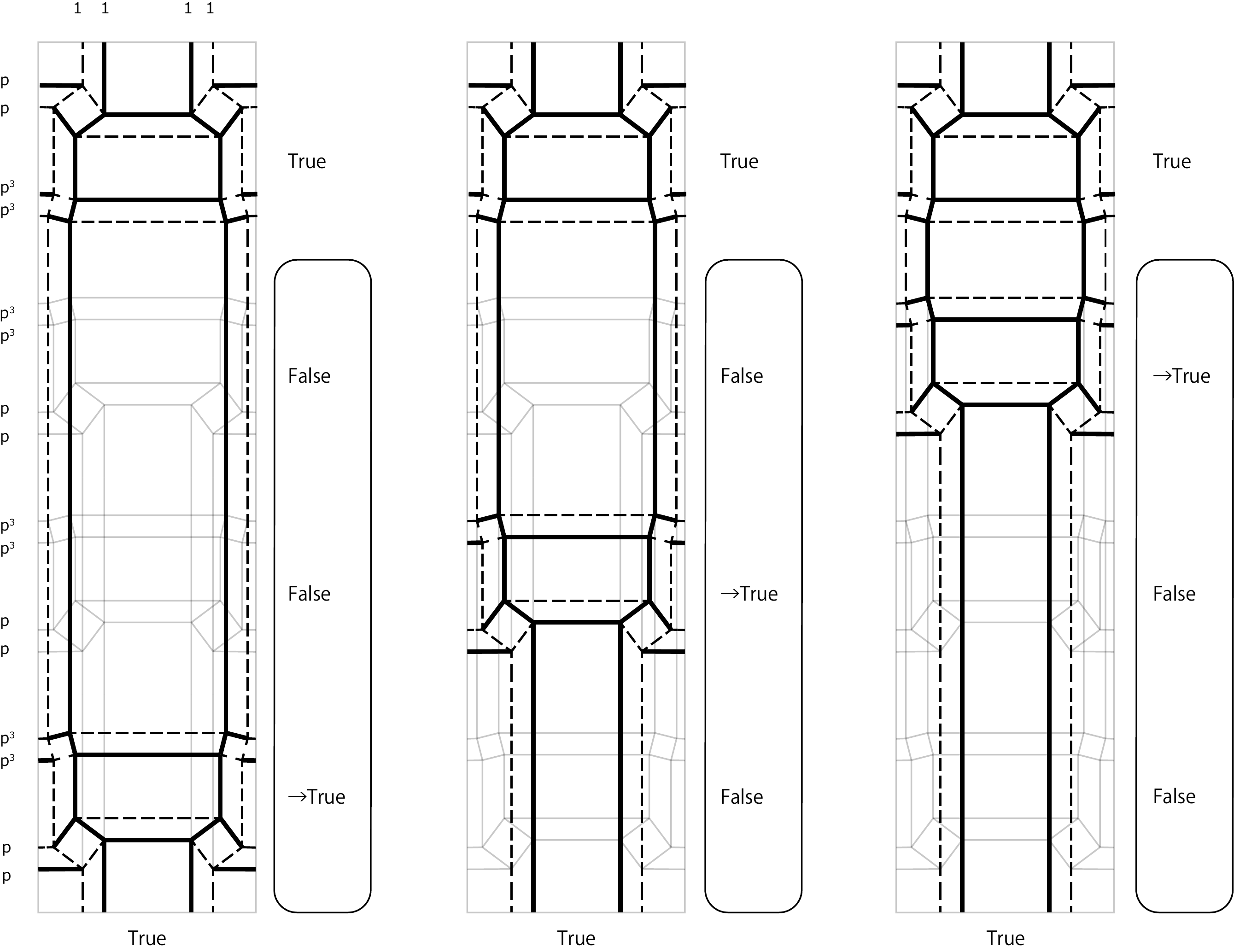}
	\centering
	\caption{Clause gadget. If vertical signal is true, this can fold into $3$ ways. Each of the folding mode corresponds to horizontal signals (true, false, false), (false, true, false), or (false, false, true). The gadgets forces us to select exactly one of the horizontal signals.}
	\label{fig:clause}
\end{figure}
A \emph{clause gadget} is designed by overlaying three widening splitters in vertical direction, connected by one upside-down widening splitter on top (see Figure~\ref{fig:clause}).
We consider three bottom horizontal wires to be the literals, and the vertical wire gives the satisfiability.
The fourth (top) horizontal wire, which is connected to the upside-down widening splitter, exists only for readjustment of widths to connect to the next clause on top.
\begin{lemma}
\label{clause-lemma}
A clause gadget folds only if exactly one of three inputs is true.
\end{lemma}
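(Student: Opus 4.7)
The plan is to translate the literal truth values into vertical-stub folding status via the three lower widening splitters, and then reduce the lemma to a closure check at the upside-down widening splitter on top. First, applying Lemma~\ref{splitter-lemma} to each of the three bottom widening splitters, the $i$-th horizontal literal is true if and only if the vertical stub leaving the top of widening splitter $i$ is folded. Hence the claim reduces to showing that the upside-down widening splitter admits a rigid folding motion (with its single output vertical wire folded) exactly when one of its three incoming vertical stubs is folded.

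Next I would invoke Lemma~\ref{splitter-mode-lemma} and Remark~\ref{rm:splitter-speed} to write down the absolute fold speeds, measured in $\tan(\rho/2)$, on every crease of the upside-down widening splitter as a function of the subset $S\subseteq\{1,2,3\}$ of true literals. By Lemma~\ref{twist-mode}, each internal vertex of the top gadget is flat-foldable and degree-$4$ with speed multiplier $p=\tfrac{1}{2}$ (bottom twist, angle $\arctan\tfrac{3}{4}$) or $p^2=\tfrac{1}{4}$ (top twist, angle $2\arctan\tfrac{7}{9}$); these multipliers fix the speed on every crease once we commit to a mode at each vertex. I would then apply Corollary~\ref{th:degree4closure} around each internal face of the upside-down widening splitter, case-splitting on $|S|\in\{0,1,2,3\}$. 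When $|S|=0$ the output vertical wire cannot be folded (Lemma~\ref{splitter-lemma} would then force all three lower stubs to fold), so the gadget is trivially unfolded. When $|S|=1$, the mode assignments in Lemma~\ref{splitter-mode-lemma} yield one of the three foldings drawn in Figure~\ref{fig:clause} and the closure identity $\prod p_i=1$ is satisfied at every face. When $|S|\geq 2$, two or more stubs arrive with the same absolute speed fixed by their lower splitters, which overdetermines the speeds inside the top gadget and forces $\prod p_i \neq 1$ at the central face, contradicting Corollary~\ref{th:degree4closure}.

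The main obstacle will be the speed bookkeeping around the central face of the upside-down widening splitter, which picks up factors of $p^{\pm 1}$ and $p^{\pm 2}$ from the four twist vertices as well as from the readjustment wire on top, with exponents and signs determined by the mode chosen at each vertex. Once those factors are enumerated using Lemma~\ref{twist-mode}, the closure equation collapses to a small polynomial identity in $p=\tfrac{1}{2}$; I expect the only solutions compatible with the boundary conditions imposed by the true-literal stubs to be the three $|S|=1$ patterns, matching Figure~\ref{fig:clause} and completing the ``only if'' direction of Lemma~\ref{clause-lemma}.
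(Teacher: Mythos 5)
Your plan does not match the mechanism that actually makes the clause gadget work, and the key step---ruling out $|S|\ge 2$ by a closure-product violation $\prod p_i\neq 1$ (Corollary~\ref{th:degree4closure}) at a face of the upside-down widening splitter---is both unsubstantiated and aimed at the wrong place. The clause gadget is not three independent vertical stubs feeding a top gadget: the three widening splitters are \emph{overlaid} on one shared vertical channel, whose vertical creases are narrow below a splitter's twists and wide above them. The paper's ``only if'' argument is geometric suppression: if one input's twists fold, then the narrow vertical creases below and the wide vertical creases above are folded, and these cross the other two horizontal wires perpendicularly at flat-foldable degree-$4$ vertices with $90^\circ$ sectors, where the speed coefficient is $0$ (Lemma~\ref{suppressor-lemma}); hence those horizontal wires cannot fold at all. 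Corollary~\ref{th:degree4closure} governs products of finite, nonzero multipliers around faces all of whose bounding creases fold, so it cannot capture these degenerate ($0$/$\infty$) constraints; moreover, the conflict between two simultaneously true inputs arises between the two active widening splitters (a width mismatch on the shared vertical channel), well below the ``central face'' of the top gadget you propose to examine. You give no computation showing $\prod p_i\neq 1$ there, and I do not see how one would, since speed ratios are exactly what the ``if'' direction (Lemma~\ref{splitter-mode-lemma}, Lemma~\ref{clause-mode-lemma}, and the synchronization argument in the layout proof) shows can be made consistent.

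Two further problems. Your $|S|=0$ case is garbled: Lemma~\ref{splitter-lemma} applied to a folded vertical output forces the corresponding horizontal input to fold, not ``all three lower stubs''; what is actually needed (and what the paper uses) is that the bottommost vertical wire is narrow and the topmost is wide, so a folded vertical signal must pass through at least one twist set, which in turn forces that input wire to be true---this gives ``at least one,'' while suppression gives ``at most one.'' Also a numerical slip in your bookkeeping: the top twist angle $2\arctan\frac{7}{9}$ has multiplier $p\left(2\arctan\frac{7}{9}\right)=\frac{1}{8}=p^3$, not $p^2=\frac{1}{4}$; the factor $p^2$ is the resulting speed ratio between the top and bottom crease pairs of the widening splitter, not the vertex multiplier itself.
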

\begin{proof}
Assume that the second input wire (from the bottom) is true.
Because every horizontal crease in a clause gadget stops when it encounters a twist fold, all the twists must be folded in the second wire section. 
The vertical wires above the twists are wide, and the vertical wires bellow the twists are narrow.  Therefore the twists connected to the first and third input wires will be suppressed, and the clause must form a (false, true, false) pattern.
This is the same for the first and third horizontal wires, so we obtain (true, false, false) or (false, false, true) patterns.
Because the bottommost vertical wire is narrow and the topmost vertical wire is wide, the vertical wire is foldable only if it uses at least one of the twists, but doing so will force us to chose one of three modes we have described.
\end{proof}
\begin{lemma}\label{clause-mode-lemma}
Clause gadget can fold in $3$ modes specified in Figure~\ref{fig:clause}, which follows the folding mode of widening splitter and its upside down mirror image.
\end{lemma}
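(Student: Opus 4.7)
The plan is to establish the ``if'' direction of the clause gadget's behavior by exhibiting, for each of the three choices of which input is true, an explicit MV assignment and a rigid folding motion realizing it; Lemma~\ref{clause-lemma} already provides the corresponding ``only if'' direction, so together these will pin down exactly three modes. The folding of each mode will be assembled from pieces that are already understood: the widening splitter mode of Lemma~\ref{splitter-mode-lemma} on the one active widening splitter, a trivial (flat) assignment on the two inactive widening splitters, and an upside-down mirror image of the widening splitter mode on the top splitter.

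Concretely, for the case in which the $i$-th input wire is true, I would assign the $i$-th widening splitter the folding mode of Figure~\ref{fig:splitter} with the speeds $1, p, p^2, p^3$ from Lemma~\ref{splitter-mode-lemma}, so that its horizontal input becomes a staircase wire and its vertical output a gutter wire at the ratio recorded in Remark~\ref{rm:splitter-speed}. The horizontal creases of the other two widening splitters receive the zero (unfolded) assignment; any crossing of the vertical wire from the active splitter with creases of an inactive splitter is a suppressor in the sense of Lemma~\ref{suppressor-lemma}, hence rigidly compatible because only one of the two orthogonal directions is folded. Finally, the upside-down widening splitter on top inherits the vertical mirror image of this mode, which matches the gutter vertical wire arriving from below and emits the fourth (topmost) staircase horizontal wire at the width demanded by the next clause.

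To confirm that the assembled assignment is a genuine rigid folding motion, I would apply Corollary~\ref{th:degree4closure} face by face: the faces interior to the active widening splitter and to the upside-down splitter satisfy the closure condition by Lemma~\ref{splitter-mode-lemma} and its mirror; the faces interior to the inactive widening splitters have all zero fold angles and satisfy it trivially; and the interface faces where a vertical gutter wire meets an unfolded horizontal line contain only $90\degrees$ sector angles, whose speed coefficients are $0$ and thus contribute consistently. The main obstacle will be bookkeeping the fold speeds across the two different twist angles $\arctan({3/4})$ and $2\arctan({7/9})$ that appear within a widening splitter and its upside-down counterpart. Here I would lean on the factor $p^2 = 1/4$ computed in the proof of Lemma~\ref{splitter-mode-lemma} together with the vertical mirror symmetry of the top splitter, so that the outgoing vertical wire speed at the top of the gadget equals the incoming one and the closure condition around the large faces connecting the three bottom splitters to the top upside-down splitter is automatically satisfied by construction.
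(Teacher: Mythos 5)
Your proposal is correct and follows essentially the same route as the paper: the paper's own (very terse) proof simply observes that each of the three modes inherits the widening-splitter folding of Lemma~\ref{splitter-mode-lemma} and that the upside-down splitter on top is compatible by mirror symmetry, yielding the relative speeds in Figure~\ref{fig:clause}. Your face-by-face closure check and the $p^2$ speed bookkeeping are just a more explicit elaboration of that same compatibility argument (with unfolded creases effectively removed, as in the paper's layout proof), not a different method.
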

\begin{proof}
The rigid folding motion follows that of widening splitter, and because the connected upside down splitter is compatible, they fold rigidly with the specified relative speed (in tangent half fold angle parameterization) in Figure~\ref{fig:clause}.
\end{proof}

\subsection{The proof (Layout)}\label{sec:proof}
\begin{proof} [Proof of Theorem~\ref{optional}]
For a given 1-in-3 SAT conjunctive normal form $R_1 \wedge \cdots \wedge R_m$ with $m$ clauses $R_1, \dots, R_m$ of $n$ variables $x_1, \dots, x_n$,
we construct a grid of  $n+4m$ horizontal wires $H_1, \dots, H_{n+4m}$ and vertical wires $V_1,\dots,V_{3m+1}$,
where we place a gadget at each intersection node of wires or at groups of nodes as specified bellow.
By default,  crossover gadgets are placed at unspecified nodes between horizontal and vertical wires.

We set the horizontal wires from $H_1, \dots, H_{n}$ to be the variables $x_1, \dots, x_n$,
and each set of four horizontal wires $(H_{n+4(l-1)+1}, \dots,H_{n+4(l-1)+4})$ to be the $l$th clause $R_l$ ($l= 1, \dots, m$).
Specifically, for $R_l = \textrm{1-in-3}(x_i, x_j, x_k)$, we place a clause gadget at the intersection between $V_1$ and $(H_{n+4(l-1)+1},\dots,H_{n+4(l-1)+4})$.
The input variables $H_{n+4(l-1)+1},H_{n+4(l-1)+2},H_{n+4(l-1)+3}$ of the clause are copied from $V_{3l-1},V_{3l},V_{3l+1}$, respectively; and the values of $V_{3l-1},V_{3l},V_{3l+1}$ are copied from $H_i,H_j,H_k$, respectively, by placing splitters at appropriate the intersections.
We claim that the value of each of the vertical wires $V_{3l-1},V_{3l},V_{3l+1}$ is constrained only by the 1-in-3 clause gadgets that they are attached to via splitters.  This is because each of these vertical wires, if folding in a true signal, will (1) suppress the other vertical wires being fed into this clause and (2) cross other horizontal wires at a crossover gadget which will not change its value.
By Lemmas~\ref{splitter-lemma}, \ref{suppressor-lemma}, \ref{crossover-lemma} and~\ref{clause-lemma}, each clause can rigidly fold only if the clause returns true.

Now, because we concatenate all cause gadgets along $V_1$, all clauses must be true if any clause is true.
We claim that at least one clause must be true if any fold line is folded, therefore the whole laid out pattern can fold only if the given 1-in-3 SAT $R_1 \wedge \cdots \wedge R_m$ is satisfied.
The claim follows because if all the clauses are false, then all variables are false and thus all horizontal wires are false.
Because all vertical wires $V_{2}\dots V_{3m+1}$ are the copies of variables, all vertical wires are also false, and thus the paper remains unfolded.

If 1-in-3 SAT is satisfied, we may assign true and false to each wire as appropriate.
For wires with false, we can assume that they are removed from the crease pattern, which will result in a flat-foldable quadrivalent mesh composed of splitters, staircase splitters, widening splitters.
For these creases, we assign the specified folding modes in Lemmas~\ref{splitter-mode-lemma} and~\ref{clause-mode-lemma}, so that they are compatible through the gutter and staircase wires at correct speeds.
In particular, the synchronization between widening splitters and the staircase splitters is realized by the different speed between $V_1$ (with speed $1$) and $V_2, \dots V_{3m+1}$ (with speed $p^2$) as in Remark~\ref{rm:splitter-speed}.
To allow for a correct synchronization, we place $90^\circ$ rotated crossovers at the intersections between $V_1$ and $H_1, \dots, H_n$ so that the speed of the horizontal wire is $p$ times the speed of the vertical wire (Figure~\ref{fig:synchronization}).
More precisely, consider that the $l$th clause's $d$th variable ($d=0,1,2$) $x_i$ is true, then each cycle formed by (1) the splitter at $(V_{3l-1+d}, H_i)$, (2) the staircase splitter at $(V_{3l-1+x}, H_{n+4l-3+d})$, (3) the widening splitter at $(V_1,H_{n+4l-3+x})$ as a part of clause gadget, and (4) the $90^\circ$ rotated splitter at $(V_1,H_i)$ as a part of crossover gadget rigidly folds with synchronized speeds at the creases.
\end{proof}

\begin{figure}[tbhp]
	\includegraphics[width=\linewidth]{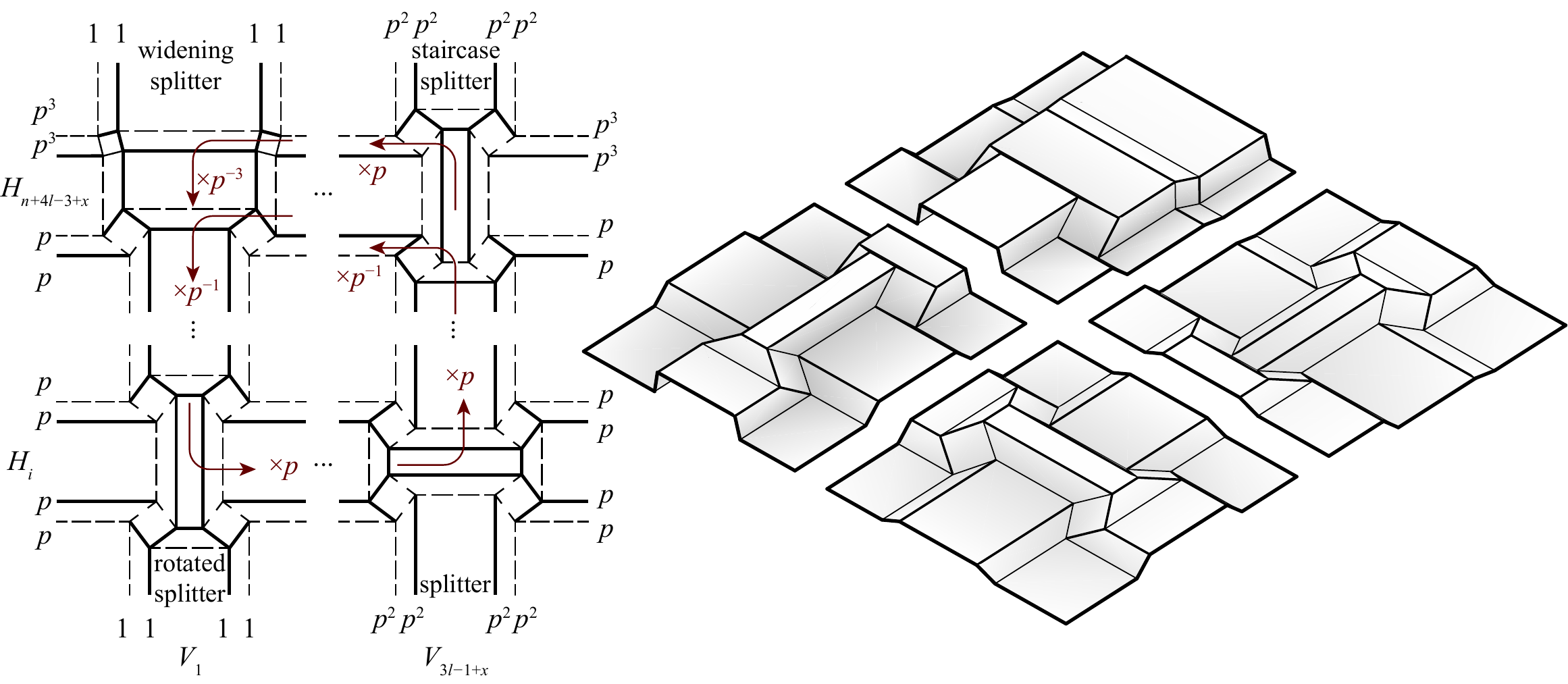}
	\centering
	\caption{Synchronization of gadgets: splitter (right bottom), staircase splitter (right top), widening splitter (left top), and $90^\circ$ rotated splitter (left bottom).  An example set of four gadgets are circle marked in Figure~\ref{fig:reduction}.}
	\label{fig:synchronization}
\end{figure}

\section{Conclusion}
We have shown that judging rigid foldability of a crease pattern using all creases is weakly NP-hard (Theorem~\ref{usingall}) and rigid foldability with optional creases is NP-hard (Theorem~\ref{optional}).
The lower bound is not proven to be tight, as we do not know if this problem belongs to NP in general.
As rigid origami is a special case of linkages, the rigid foldability problem belongs to Class $\exists \mathbb R$~\cite{AbelDDELS16}.
We would like to know a tight lower bound in the future studies.
For example, for flat-foldable degree-$4$-vertex origami, rigid foldability with given assignment is solvable in linear time, and thus rigid foldability belongs to NP (Theorem~\ref{deg4-np}).
Note that our work only proves weakly NP hardness of the model using all creases since the all-crease model does not allow for the overlay scheme for creating useful gadgets.
It is still open if the hardness of two models differ.

\section*{Acknowledgements}
This work was begun at the 2015 Bellairs Workshop on Computational Geometry, co-organized by Erik Demaine and Godfried Toussaint. 
We thank the other participants of the workshop for stimulating discussions.

\bibliographystyle{plain}       
\bibliography{paper}   

\end{document}